\def\spacingset#1{\renewcommand{\baselinestretch}%
{#1}\small\normalsize} \spacingset{1}
\def\dispmuskip{\thinmuskip= 3mu plus 0mu minus 2mu \medmuskip=  4mu plus 2mu minus 2mu \thickmuskip=5mu plus 5mu minus 2mu}
\def\textmuskip{\thinmuskip= 0mu                    \medmuskip=  1mu plus 1mu minus 1mu \thickmuskip=2mu plus 3mu minus 1mu}
\def\beq{\dispmuskip\begin{equation}}    \def\eeq{\end{equation}\textmuskip}
\def\beqn{\dispmuskip\begin{displaymath}}\def\eeqn{\end{displaymath}\textmuskip}
\def\bea{\dispmuskip\begin{eqnarray}}    \def\eea{\end{eqnarray}\textmuskip}
\def\bean{\dispmuskip\begin{eqnarray*}}  \def\eean{\end{eqnarray*}\textmuskip}
\def\paradot#1{\vspace{1.3ex plus 0.7ex minus 0.5ex}\noindent{\bf\boldmath{#1.}}}
\newtheorem{assumption}{Assumption}
\newtheorem{corollary}{Corollary}
\newtheorem{lemma}{Lemma}
\newtheorem{remark}{Remark}
\newcommand*\samethanks[1][\value{footnote}]{\footnotemark[#1]}
\newcommand{\MH}{Metropolis-Hastings}
\newcommand{\eps}{\epsilon}
\newcommand{\wh}{\widehat}
\newcommand{\wt}{\widetilde}
\newcommand{\ov}{\overline}
\def\E{\mathbb{E}}                         
\def\Var{{\mathbb{V}}}
\def\G{{\mathcal{G}}}
\def\Cov{{\rm Cov}}
\def\Corr{{\rm Corr}}
\def\P{{\rm P}}                         
\def\G{\Gamma}
\def\a{\alpha}
\def\T{\Theta}
\def\s{\sigma}
\def\t{\theta}
\def\b{\beta}
\def\l{\lambda}
\def\CT{\text{\rm CT}}
\def\IF{\text{\rm IF}}
\def\opt{\text{\rm opt}}
\newcommand{\half}{\frac12}
\def\N{\mathcal N}
\def\G{\cal G}
\def\d{{\rm d}\,}
\def\sgn{\text{\rm sgn}}
\newcommand{\usub}[1]{{{\bm u}}_{(#1)}}
\newcommand{\Usub}[1]{{{\bm U}}_{(#1)}}
\newcommand{\zsub}[1]{{{ z}}_{(#1)}}
\newcommand{\zvec}{\mathcal{Z}}
\newcommand{\gsub}[1]{{{ g}}_{(#1)}}
\newcommand*\patchAmsMathEnvironmentForLineno[1]{%
  \expandafter\let\csname old#1\expandafter\endcsname\csname #1\endcsname
  \expandafter\let\csname oldend#1\expandafter\endcsname\csname end#1\endcsname
  \renewenvironment{#1}%
     {\linenomath\csname old#1\endcsname}%
     {\csname oldend#1\endcsname\endlinenomath}}%
\newcommand*\patchBothAmsMathEnvironmentsForLineno[1]{%
  \patchAmsMathEnvironmentForLineno{#1}%
  \patchAmsMathEnvironmentForLineno{#1*}}%
\def\@seccntformat#1{\@ifundefined{#1@cntformat}%
   {\csname the#1\endcsname\quad}  
   {\csname #1@cntformat\endcsname}
}
\let\oldappendix\appendix 
\renewcommand\appendix{%
    \oldappendix
    \newcommand{\section@cntformat}{\appendixname~\thesection\quad}
}
\begin{document}
\title{The Block Pseudo-Marginal Sampler}
\author{M.-N. Tran\thanks{Discipline of Business Analytics, University of Sydney}
\and R. Kohn \thanks{School of Economics, UNSW School of Business}
\and M. Quiroz \samethanks
\and M. Villani\thanks{Department of Computer and Information Science, Link\"{o}ping University}
}
\maketitle


\begin{abstract}
The pseudo-marginal (PM) approach
is increasingly used for Bayesian inference in statistical models, where the likelihood
is intractable but can be estimated unbiasedly.
\cite{deligiannidis:doucet:pitt:2015} show how the PM approach can be made much more
efficient by correlating the underlying Monte Carlo (MC) random numbers used to form the estimate of the likelihood at the current and
proposed values of the unknown parameters. Their approach greatly speeds up the standard PM algorithm,
as it requires a much smaller number of samples or particles to form the optimal likelihood estimate.
Our paper presents an alternative implementation of the correlated PM approach, called the
block PM, which divides the underlying random numbers into blocks so that the likelihood estimates for the
proposed and current values of the parameters only differ by the random numbers in one block.
We show that this implementation of the correlated PM can be much more efficient for some specific problems
than the implementation in \cite{deligiannidis:doucet:pitt:2015}; for example
when the likelihood is estimated by subsampling or the likelihood is a product of terms each of which is given by an integral which
can be estimated unbiasedly by randomised quasi-Monte Carlo. Our article provides methodology and guidelines for efficiently implementing the
block PM. A second advantage of the the block PM is that it provides a direct way to control the correlation between the logarithms of the estimates of the
likelihood at the current and proposed values of the parameters than the implementation in \cite{deligiannidis:doucet:pitt:2015}.
We obtain methods and guidelines for selecting the optimal number of samples based on idealized but realistic assumptions.

\paradot{Keywords}
Intractable likelihood; Unbiasedness; Panel-data; Data subsampling; Randomised quasi-Monte Carlo.
\end{abstract}

\spacingset{1.45} 
\section{Introduction} \label{Sec:introduction}
In many statistical applications the likelihood is analytically or computationally intractable,
making it difficult to carry out Bayesian inference. An example of models where the likelihood is often intractable are generalised linear mixed models (GLMM) for longitudinal data, where random effects are used to account for the dependence between the observations measured on the same individual \citep{Fitzmaurice:2011,Bartolucci:2012}. The likelihood is intractable because it is an integral over the random effects, but it can be easily estimated unbiasedly using importance sampling. The second example that uses a variant of the unbiasedness idea, is that of
unbiasedly estimating the log-likelihood by subsampling, as in \cite{quiroz:villani:kohn::2016}.
Subsampling is useful when the log-likelihood is a sum of terms, with each term expensive to evaluate, or when there is a very large number of such terms.
\cite{quiroz:villani:kohn::2016}
estimate the log-likelihood unbiasedly in this way and then bias correct the resulting likelihood estimator to use within a PM algorithm. See also \cite{Quiroz:2016} for an alternative subsampling approach using the Poisson estimator to obtain an unbiased estimator of the likelihood
and \cite{quiroz:DelayedAcc} for subsampling with delayed acceptance.
State space models are a third class of models
where the likelihood is often intractable but can be unbiasedly estimated using
an importance sampling estimator \citep{Shephard:1997,Durbin:1997} or a particle filter estimator \citep{DelMoral:2004, Andrieu:2010}.

It is now well known in the literature that a direct way to overcome the problem of working with an
intractable likelihood is to estimate the likelihood unbiasedly and use this estimate within a Markov chain Monte Carlo (MCMC)
simulation on an expanded space that includes the random numbers used to construct the likelihood estimator.
This was first considered by \cite{LinLiuSloan:2000} in the Physics literature and \cite{Beaumont:2003} in the Statistics literature.
It was formally studied in \cite{Andrieu:2009}, who called it the pseudo-marginal (PM) method and gave conditions
for the chain to converge. \cite{Andrieu:2010} use the PM approach for inference in state space models
where the likelihood is estimated unbiasedly by the particle filter.
\cite{Flury:2011} give an excellent discussion with illustrative examples of PM.
\cite{Pitt:2012} and \cite{Doucet:Pitt:Deligiannidis:Kohn} analyse the effect of estimating the likelihood
and show that the variance of the log-likelihood estimator should be around 1
to obtain an optimal tradeoff between the efficiency of the Markov chain
and the computational cost. See also \cite{Sherlock:2015}, who consider random walk proposals for the
parameters, and show that the optimal variance of the log of the likelihood estimator can be somewhat
higher in this case.

A key issue in estimating models by standard PM is that the variance of the log of the estimated likelihood grows linearly with the
number of observations $T$. Hence, to keep the variance of the log of the estimated likelihood small and around 1 it is necessary
for  the number of samples $N$, used in constructing the likelihood estimator, to increase in
proportion to $T$, which means that PM requires $O(T^2)$ operations at every MCMC iteration. Starting with \cite{Lee:2010}, several authors have noted that PM methods can benefit from updates of the underlying random numbers used to construct the estimator that correlate the numerator and denominator of the PM acceptance ratio \citep{deligiannidis:doucet:pitt:2015, Dahlin:2015}. \cite{Lee:2010} propose to use MH moves that alternate between i) updating the parameters conditional on the random numbers and ii) updating the random numbers conditional on the parameters. The effect is that the random numbers are fixed at some iterations hence inducing a high correlation when the parameters are updated.
However, this approach gives no correlation whenever the random numbers are updated as they are all updated simultaneously. Unless the variance of the likelihood estimator is very small, the \cite{Lee:2010} PM sampler is likely to quickly get stuck.
The \cite{Lee:2010} proposal is a special case of \cite{Stramer:2011}, which we discuss in more detail in Section~\ref{sub:Diffusion process example}.

 \cite{deligiannidis:doucet:pitt:2015} propose a better way to induce correlation between
the numerator and denominator of the MH ratio by
correlating the Monte Carlo (MC) random numbers used in constructing the estimators of the likelihood at the
current and proposed values of the parameters.
We call this approach the correlated PM (CPM) method,
and we call the standard PM the independent PM (IPM) method,
as  a new independent set of MC random numbers is used each time the likelihood is estimated.
\cite{deligiannidis:doucet:pitt:2015} show that by inducing a high correlation between these ensembles of MC random numbers
it is only necessary to increase the number of samples $N$ in proportion to $T^\half$, reducing
 the CPM algorithm to $O(T^{3/2})$ operations per iteration. This is likely to be an important breakthrough in the ability
of PM to be competitive with more traditional MCMC methods.
\cite{Dahlin:2015}  also propose a CPM algorithm but did not derive any optimality results.

Our paper proposes an alternative implementation of the CPM approach, called the block pseudo-marginal (BPM), that can be much more efficient than CPM for some specific problems.
The BPM approach divides the set of underlying random numbers into blocks and updates the unknown parameters {\em jointly} with one of these blocks at any one iteration
 which induces a positive correlation between the numerator and denominator of the MH acceptance ratio, similarly to the CPM. This correlation reduces the variation in the \MH{} acceptance probability, which helps the underlying Markov chain of iterates to mix well even if highly variable estimates of the likelihood are used.
This means that a much smaller number of samples is needed than if all the underlying random variables are updated independently each time.
We derive methodology and  guidelines for selecting an optimal number of samples in BPM based on idealized but plausible assumptions.

Although CPM is a more general approach than BPM, we believe that the BPM approach
has the following advantages over the CPM method in specific settings.

\noindent
{\sf (i)~Efficient data handling}. For some applications such as data subsampling \citep{Quiroz:2016, quiroz:villani:kohn::2016} the BPM method
can take less CPU time than the IPM and CPM as it is unnecessary to work with the whole data set, and  it is also unnecessary
to generate the full set of underlying random numbers in each iteration.

\noindent
{\sf (ii)~Randomised quasi Monte Carlo}. The BPM method offers a natural way to estimate integrals
unbiasedly using randomized quasi Monte Carlo (RQMC) sampling instead of Monte Carlo (MC).
In many cases, numerical integration using RQMC achieves a better convergence rate than MC.
Using RQMC has recently proven successful in the intractable likelihood literature; see, e.g., \cite{Gerber:2015} and \cite{Gunawan:2016}.
We show that, if RQMC is used to estimate the likelihood, the optimal number of samples required at each iteration of BPM is approximately $O(T^{7/6})$,
compared to $O(T^{3/2})$ in the CPM approach of \cite{deligiannidis:doucet:pitt:2015} who use MC.
Correlating randomised quasi numbers in CPM is challenging, as it is difficult to preserve the desirable uniformity
properties of RQMC. See \cite{Gunawan:2016} for a first attempt at correlating quasi random numbers in CPM.

\noindent
{\sf (iii)~Preservation of correlation}.  If the likelihood can be factorised into blocks,
then the correlation of the logs of the estimated likelihoods at the current and proposed values
is close to $1-1/G$, where $G$ is the number of blocks in the blocking approach. That is, the correlation between the proposed and current values of the log likelihood
estimates is controlled directly rather than indirectly and nonlinearly through the correlated ensembles of random numbers.
This property of correlation preservation is a potentially important issue as the log of the estimated likelihood can be a very nonlinear transformation of the underlying random variables, and hence correlation may not be preserved in CPM.

As we note above, CPM is a more general approach than BPM because it can be used in applications where blocking cannot be applied such
as correlating the number of terms used in the Poisson estimator when debiasing \citep{Quiroz:2016}. Second, if the likelihood cannot be factored into a number of
independent blocks such as in nonlinear state space models, then it is unclear whether BPM has any advantages over CPM. Finally,
in some problems such exact subsampling, it will be useful to combine BPM and CPM
to obtain a more efficient correlated PM approach \citep{Quiroz:2016}.

The paper is organized as follows.
Section \ref{Sec: BPM} introduces the BPM approach
and Section \ref{SS: analysis of block} presents methodology and guidelines for efficiently implementing the block PM.
Section \ref{sec:applications} presents applications.
Section \ref{Sec:Conclusion} concludes.
There is an an online supplement to the paper containing five appendices. Appendix~\ref{Proofs} gives proofs  of all the results in the paper.
Appendix~\ref{sec: large sample} gives some large-sample properties of the BPM for panel data.
Appendix~\ref{app: CT derivation} derives the expression for computing time.
Appendix~\ref{S: A toy example} presents an illustrative toy example. Appendix~\ref{S: further applications}
gives two further applications.

\section{The block pseudo-marginal approach}\label{Sec: BPM}
\subsection{The independent PM approach} \label{SS: the IBM}
Let $y$ be a set of observations with density $ L(\theta):=p(y|\theta)$,
where $\theta\in\Theta$ is the vector of unknown parameters and
let $p_\Theta(\theta)$ be the prior for $\theta$.
 We are interested in sampling from the posterior $\pi(\theta)\propto p_\T(\theta)L(\theta)$ in
models where the likelihood $L(\theta)$ is analytically or computationally intractable.
Suppose that  $L(\theta)$ can be estimated by a nonnegative and unbiased estimator $\wh L(\theta, \bm u)$,
which we sometimes write as $\wh L(\theta)$,
with $ \bm u \in \mathbb{U}$ the set of
random numbers used to compute $\wh L(\theta)$.
The likelihood estimator $\wh L(\theta,\bm  u)$ typically depends on an algorithmic number $N$
that controls the accuracy of $\wh L(\theta, \bm u )$, and is proportional to the cardinality or dimension of the set $\bm u $.
For example, $N$ can be the number of importance samples if the likelihood is estimated by importance sampling,
or $N$ is the number of particles if the likelihood in state space models is estimated by particle filters.
However, for simplicity, we will call $N$ the number of samples throughout.
Denote the density function of $\bm  u$ by $p_U(\cdot)$ and define a joint target  density of $\theta$ and $\bm  u$ as
\beq\label{eq:target 1}
\ov\pi(\theta, \bm u ):=p_\Theta(\theta)\wh L(\theta,\bm u)p_U(\bm u)/\ov L,
\eeq
where $\ov L: = p(y)= \int p(y|\theta) p_\Theta(\theta) \d \theta$ is the marginal likelihood.
$\ov\pi(\theta, \bm u)$ admits $\pi(\theta)$ as its marginal density because  $\int\wh L(\theta, \bm u)p_U(\bm u)\d {\bm u}=L(\theta)$ by the unbiasedness
of $\wh L(\theta, \bm u)$.
Therefore, we can obtain samples from the posterior $\pi(\theta)$  by sampling from $\ov\pi(\theta, \bm u)$.

Let $q_\Theta(\theta|\theta')$ be a proposal density for $\theta$, conditional on the current state $\theta'$.
Let $\bm u'$ be the corresponding current set of random numbers used to compute $\wh L(\theta', \bm u')$.
The independent PM algorithm generates samples from $\pi(\theta)$ by generating a Markov chain with invariant density $\ov\pi(\theta,\bm u)$
using the \MH{} algorithm with proposal density $q(\theta, \bm u|\theta',  \bm u')=q_\Theta(\theta|\theta')p_U(\bm u)$.
The proposal $(\theta,\bm u)$ is accepted with probability
\begin{align}\label{eq:acceptance}
\alpha(\theta',\bm u'; \theta, \bm u)& :=\min\left(1,\frac{\ov\pi(\theta, \bm u)}{\ov\pi(\theta', \bm u')}\frac{q(\theta', \bm u'|\theta, \bm u)}{q(\theta, \bm u|\theta', \bm u')}\right)
 =\min\left(1,\frac{p_\Theta(\theta)\wh L(\theta,\bm u)}{p_\T(\theta')\wh L(\theta', \bm u')}\frac{q_\Theta(\theta'|\theta)}{q_\Theta(\theta|\theta')}\right),
\end{align}
which is computable.
In the IPM scheme, a new independent set of MC random numbers $\bm u$ is generated
each time the likelihood estimate is computed,
and it is usually unnecessary to store $\bm  u$ and $\bm u'$.

\cite{Pitt:2012} and \cite{Doucet:Pitt:Deligiannidis:Kohn}
show for the IPM algorithm that the variance of $\log\;\wh L(\theta,\bm u)$
should be around 1 in order to obtain an optimal tradeoff between the
computational cost and efficiency of the Markov chain in $\theta$ and $\bm u$.
However, in some problems it may be prohibitively expensive to take a $N$ large enough to ensure
that $\Var(\log\;\wh L(\theta,\bm u))\approx 1$.

\subsection{The block PM approach}\label{subsec:BPM}
In the block PM algorithm, instead of generating a new set $\bm u$
when estimating the likelihood as in the independent PM,
we update $\bm u$ in blocks.
Suppose we divide the set of variables $\bm u$ into $G$ blocks $\usub{1},...,\usub{G}$, with $\usub{j} \in \mathbb{U}_j$, $j=1, \dots, G$, and $\mathbb{U}:= \mathbb{U}_1 \times \mathbb{U}_2 \times \cdots \times \mathbb{U}_G$. We construct $p_U(\bm u) := \prod_{j=1}^G p_{\Usub{j}}(\usub{j}) $. We rewrite
the extended target \eqref{eq:target 1} as
\beq\label{eq:target 2}
\ov\pi(\theta,\usub{1:G})=p_\Theta(\theta)\wh L(\theta,\usub{1:G})\prod_{j=1}^G p_{\Usub{j}}(\usub{j}) /\ov L,
\eeq
and propose to update $\theta$ and just one block of the $\usub{j},~\,  j=1, \dots, G$.
Let ${\bm u}^\prime :=( \usub{1}^\prime, \dots, \usub{G}^\prime )$ be the current value of $\bm u$. Then the
proposal distribution for $\bm u$ is
\begin{align}\label{eq: proposal q for u}
q(\d \usub{1:G}|\usub{1:G}^\prime)&: =
\sum_{i=1}^G \omega_i p_{\Usub{i}} (\usub{i}) \d \usub{i} \prod_{j \neq i} \delta_{\usub{j}^\prime} (\d \usub{j}),
\end{align}
with $\omega_i=1/G$ for all $i$ and $\delta_a(\d
\bm b)$ is the delta measure concentrated at $\bm a$. The next lemma expresses the acceptance probability
\eqref{eq:acceptance} of the PM scheme with proposal density \eqref{eq: proposal q for u}.

\begin{lemma} \label{lemma: accept prob}
The acceptance probability  \eqref{eq:acceptance} of the PM scheme with proposal distribution \eqref{eq: proposal q for u} is
\beq\label{eq:acceptance 2}
\min\left(1,\frac{p_\Theta(\theta)\wh L(\theta,\usub{1:k-1}^\prime,\usub{k},\usub{k+1:G}^\prime)}{p_\T(\theta')\wh  L(\theta',\usub{1:G}^\prime)}\frac{q_\Theta(\theta'|\theta)}{q_\Theta(\theta|\theta')}\right),
\eeq
and is computable.
\end{lemma}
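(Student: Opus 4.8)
The plan is to substitute the explicit forms of the extended target \eqref{eq:target 2} and the block proposal \eqref{eq: proposal q for u} into the generic Metropolis--Hastings acceptance probability \eqref{eq:acceptance} and show that almost all factors cancel. The central observation is that the proposal moves from $(\theta',\usub{1:G}^\prime)$ to a state differing in exactly one block, say block $k$, so that $\usub{j}=\usub{j}^\prime$ for every $j\neq k$; the reverse move must therefore update the same block $k$ in order to return to $\usub{1:G}^\prime$. Only the $i=k$ summand of the mixture in \eqref{eq: proposal q for u} is relevant in either direction.

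First I would treat the proposal as a measure rather than a density and form the acceptance ratio as a Radon--Nikodym derivative with respect to a common dominating measure, so that the Dirac components are handled rigorously. In the forward kernel the $i=k$ summand contributes the weight $\omega_k$, the density $p_{\Usub{k}}(\usub{k})$ for the updated block, and the product $\prod_{j\neq k}\delta_{\usub{j}^\prime}(\d\usub{j})$ over the frozen blocks. In the reverse kernel the $i=k$ summand contributes $\omega_k$, $p_{\Usub{k}}(\usub{k}^\prime)$, and the product $\prod_{j\neq k}\delta_{\usub{j}}(\d\usub{j}^\prime)$; since $\usub{j}=\usub{j}^\prime$ for $j\neq k$, these Dirac factors are supported on the same points and coincide with the forward ones, so they cancel in the ratio, as do the two weights $\omega_k=1/G$. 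This leaves the proposal ratio equal to $\frac{q_\Theta(\theta'|\theta)\,p_{\Usub{k}}(\usub{k}^\prime)}{q_\Theta(\theta|\theta')\,p_{\Usub{k}}(\usub{k})}$.

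Next I would form the target ratio from \eqref{eq:target 2}. Because $\usub{j}=\usub{j}^\prime$ for $j\neq k$, each factor $p_{\Usub{j}}(\usub{j})$ with $j\neq k$ cancels against its counterpart in the denominator, the product of block densities collapses to $p_{\Usub{k}}(\usub{k})/p_{\Usub{k}}(\usub{k}^\prime)$, and the marginal likelihood $\ov L$ cancels. Multiplying the target ratio by the proposal ratio, the surviving factors $p_{\Usub{k}}(\usub{k})$ and $p_{\Usub{k}}(\usub{k}^\prime)$ cancel as well, leaving exactly $\frac{p_\Theta(\theta)\wh L(\theta,\usub{1:k-1}^\prime,\usub{k},\usub{k+1:G}^\prime)}{p_\Theta(\theta')\wh L(\theta',\usub{1:G}^\prime)}\frac{q_\Theta(\theta'|\theta)}{q_\Theta(\theta|\theta')}$, where the single updated block $k$ has been made explicit in the argument of $\wh L$. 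Taking the minimum with $1$ yields \eqref{eq:acceptance 2}, which is computable since it does not involve $\ov L$.

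The only delicate point is the measure-theoretic bookkeeping in the second step: one must confirm that the forward and reverse proposals share the same singular (Dirac) part, so that the acceptance ratio reduces to a ratio of the absolutely continuous parts alone. This is the standard argument that a Metropolis--Hastings kernel updating a selected sub-block, while freezing the remaining coordinates symmetrically in both directions, admits a well-defined and reversible acceptance rule; once this is granted, the cancellations above are routine.
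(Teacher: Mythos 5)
Your proposal is correct and follows essentially the same route as the paper: both arguments reduce to showing that the measure $p_U(\bm u)\,q(\d \bm u'|\bm u)$ is symmetric in $(\bm u,\bm u')$ — the paper states this as the identity \eqref{eq: ratio} via the symmetric measure $\nu_j$, while you carry out the same cancellation explicitly (matching Dirac parts, mixture weights $\omega_k$, and the block densities $p_{\Usub{k}}$) after restricting to the $i=k$ summand. Your version is somewhat more detailed than the paper's, which simply asserts the symmetry and says the result follows, but the underlying idea is identical.
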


This allows us to carry out MCMC, similarly to other component-wise MCMC schemes;
see, e.g., \cite{Johnson:2013}.
We show in the proof of part (ii) of Lemma~\ref{lem:acceptance prob} that by fixing all the $\usub{j}$ except $\usub{k}$, the variance of the log of the ratio
of the likelihood estimates is reduced. This reduction in variance may help the chain mix well, although there is a potential tradeoff
between block size and mixing as the $\usub{k}$ mix more slowly. Lemma~\ref{lemma:uncorrelated in n} shows that for large sample sizes,
moving the $\usub{k}$ slowly does not impact the mixing of the $\theta$ iterates because $z(\theta, \bm u )$ and $\theta$ are uncorrelated. Furthermore,
we have also found this to be the case empirically for moderate and large sample sizes. These comments of slower mixing also apply to the
correlated PM sampler.

\subsection{Randomized quasi Monte Carlo} \label{SS: rqmc}
RQMC has recently received increasing attention in the intractable likelihood literature \citep{Gerber:2015,Tran:2015,Gunawan:2016}. See
\cite{Niederreiter1992} and \cite{Dick2010} for a thorough treatment.
Typically, MC methods estimate a $d$-dimensional integral of interest based on i.i.d. samples from the uniform distribution $\mathcal{U}(0,1)$.
RQMC methods are  alternatives that choose {\em deterministic} points in $[0,1)$ evenly in the sense that they minimize the so-called star-discrepancy of the point set.
Randomized MC then injects randomness into these points such that the resulting points preserve the low-discrepancy property and, at the same time, they marginally have a uniform distribution.
\cite{Owen:1997} shows that the variance of RQMC estimators is of order $N^{-3}(\log N)^{d-1}=O(N^{-3+\epsilon})$ (where $d$ is the dimension of the argument in the integrand)
for any arbitrarily small $\epsilon>0$, compared to $O(N^{-1})$ for plain MC estimators, with $N$ the number of samples.
Central limit theorems for RQMC estimators are obtained in \cite{Loh2003}.

In block PM with RQMC numbers, the set $\bm u$ will be RQMC numbers instead of MC numbers.
In this paper, RQMC numbers are generated using the scrambled net method of \cite{Matousek:1998}.

\subsection{The correlated PM}\label{SS: CPM}
Instead of updating $\bm u$ in blocks, \cite{deligiannidis:doucet:pitt:2015} move $\bm u$ slowly
by correlating the proposed $\bm u$ with its  current value $\bm u^\prime$.
Suppose that the underlying MC numbers $\bm u$ are standard univariate normal variables
and $\varrho>0$ is a number close to 1.  \cite{deligiannidis:doucet:pitt:2015} set $\bm u=\varrho \bm u'+\sqrt{1-\varrho^2}\bm \eps$
with $\bm \eps$ a vector of standard normal variables of the same size as $\bm u'$.
We note that it is challenging to extend this
correlated PM approach to the case where $\bm u$ are RQMC numbers,
because in the RQMC framework we work with uniform random numbers so that inducing correlation in such numbers may break down their desired uniformity
\citep{Gunawan:2016}.
In contrast,  it is straightforward to use RQMC in the standard way in the block PM.

\section{Properties of the block PM} \label{SS: analysis of block}
Suppose that the likelihood can be written as a product of $G$ independent terms,
\begin{align}\label{eq:panel data llh}
L(\theta) = \prod_{k=1}^G L_{(k)}(\t)\;\;\text{where}\;\; L_{(k)}(\t)= p(y_{(k)}|\theta).
\end{align}
We show in Section \ref{sec:applications} how to apply the  block PM approach when the likelihood cannot be factorised as in \eqref{eq:panel data llh}.
We assume that the $k^{th}$ likelihood term $L_{(k)}(\theta)$ is estimated unbiasedly by $\wh L_{(k)}(\theta, \usub{k})$,
where the $\usub{k}$ are independent with $ \usub{k} \sim p_{\Usub{k}}( \cdot) $.
Let $N_{(k)}$ be the number of samples used to compute $\wh L_{(k)}(\t, \usub{k})$, with
$N := N_{(1)} + \cdots N_{(G)} $. An  unbiased estimator of the likelihood is
\begin{align*}
\wh L(\theta, \bm u ) := \prod_{k=1}^G\wh L_{(k)}(\theta, \usub{k}),
\end{align*}
where $\bm u = \{\usub{1}, \dots, \usub{G} \}$.

\paradot{Example: panel-data models} Consider a panel-data model with $T$ panels, which we divide into $G$ groups $y_{(1)}$,...,$y_{(G)}$, with approximately $T/G$ panels in each. See Section~\ref{sub:PanelData}.

\paradot{Example: big-data} Consider a big-data set with $T$ independent observations, which we divide into $G$ groups $y_{(1)}$,...,$y_{(G)}$, with $T/G$ observations in each. See Section~\ref{SS: data subsampling}.

\subsection{Block PM based on the errors in the estimated log-likelihood} \label{SS: pm based on z's}
Our analysis of the block PM builds on the framework of \cite{Pitt:2012} who provide an analysis of the IPM based on the error in the log of the estimated likelihood.
For any $\theta \in \Theta$, $\usub{k} \in \mathbb{U}_k$, $k=1, \dots, G$, we define
\[\zsub{k}:=\zsub{k}(\theta,\usub{k}):=\log\;\wh L_{(k)}(\theta, \usub{k})-\log\;L_{(k)}(\theta) \quad \text{and} \quad
z(\theta,\usub{1:G}):= \zsub{1}(\theta,\usub{1})+ \cdots +\zsub{G}(\theta,\usub{G}). \]
More generally, for indices $1 \leq i_1 < i_2 < \cdots < i_k\leq G$, we define
\[z(\theta,\usub{i_1:i_k}):= \zsub{i_1}(\theta, \usub{i_1} ) + \zsub{i_2}(\theta, \usub{i_2} ) + \cdots + \zsub{i_k}(\theta, \usub{i_k} ).\]
If \[\bm u=\usub{1:G} \sim \prod_{k=1}^G p_{U_{(k)}}(\cdot), \] then $\zsub{k}(\theta,\usub{k}) $
is the error in the log of the estimated likelihood of the $k$th block
 and $z(\theta, \bm u)$ is the error in the log of the estimated  likelihood.
We now follow \cite{Pitt:2012} and work with the $\zsub{k}$ and $z$ instead of the $\usub{k}$ and $\bm u$, for two reasons.
First, the $\zsub{k}$ and $z$ are scalar, whereas the $\usub{k}$ and $\bm u$ are likely to be high dimensional vectors;
second, the properties of the pseudo-marginal MCMC depend on $\bm u $ only
through $z$.

We use the notation $w\sim \N(a,b^2)$ to mean that $w$ has a normal distribution with mean $a$ and variance $b^2$, and
denote the density of $w$ as $\N(w; a,b^2)$.
Our guidelines for the block PM are based on Assumptions~\ref{ass: assumption on variances}--\ref{ass: perfect proposal}.
\begin{assumption} \label{ass: assumption on variances}
Suppose $ \usub{1}, \dots,  \usub{G}$ are independent and generated from $p_{\Usub{k}}(\cdot)$ for $k=1,...,G$. We assume that
\begin{enumerate}
\item [(i)] For each block $k$, there is a  $\gamma^2_{(k)}(\theta) >0$, an $N_{(k)}>0$ and a $\varpi>0$ such that
\begin{align*}
\Var ( \zsub{k}(\theta, \usub{k})) & =\frac{\gamma^2_{(k)}(\theta)}{N_{(k)}^{2\varpi}}.
\end{align*}
\item [(ii)] For a given $\sigma^2>0$, let $N_{(k)}$ be a function of $\theta$, $\sigma^2$ and $G$ such that
$\Var(\zsub{k}(\theta,\usub{k}))=\sigma^2/G$, i.e. $N_{(k)}=N_{(k)}(\theta, \s^2,G)=[G\gamma^2_{(k)}(\theta) /\sigma^2]^{1/(2\varpi)}$.
Thus, $\sigma^2 = \Var(z(\theta, \bm u)) $ is the variance of the log of the estimated likelihood.
\item [(iii)]
Both $z(\theta, \usub{1:G})$ and $z(\theta, \usub{1:k-1},\usub{k+1:G})$
are normally distributed for each $k$.
\end{enumerate}
\end{assumption}
It is clear from Lemma~\ref{lemm: clt log pi} that $\varpi=1/2$ if the likelihood is estimated using MC,
and $\varpi=3/2-\epsilon$ for any arbitrarily small $\epsilon>0$ if the likelihood is estimated using RQMC.
We note that $N_{(k)}$ is the total number of samples used for the $k$th group, and will usually be different from $N_k$.
In panel-data models and in the diffusion example in Section \ref{sub:Diffusion process example}, $N_{(k)} = (T/G) N_k$ and in the data subsampling example
$N_{(k)} = T/G$.
For the panel-data and subsampling applications, parts (i) and (ii) of
Assumption~\ref{ass: assumption on variances} can be made to hold by construction
 because it is straightforward to
estimate the variance of $\zsub{k}$ accurately for each $k$ and $\theta$.
Part (iii) will usually hold for $G$ large by the central limit theorem (see  Lemma~\ref{lemm: clt log pi}).

\begin{assumption}\label{assum: assumption on additional term}
Suppose that $\usub{k}\sim p_{\Usub{k}}(\cdot) $ and  $(\usub{1:k-1}^\prime,\usub{k+1:G}^\prime)\sim \ov \pi(\cdot|\theta) $  and that $\usub{k}$ is independent of $ \usub{1:k-1}^\prime$ and $\usub{k+1:G}^\prime$. We assume that
$\zsub{k}(\theta, \usub{k}) + z(\theta, \usub{1:k-1}^\prime,\usub{k+1:G}^\prime)$ is normally distributed for a given $\theta$.
\end{assumption}
\begin{remark} \label{remark: remark on assumption additional term}
Assumption~\ref{assum: assumption on additional term} relies on $G$ being large so that the contribution of $\zsub{k}(\theta, \usub{k}) $ is very small compared to that of $z(\theta, \usub{1:k-1}^\prime,\usub{k+1:G}^\prime)$.
If $N_k$ is large, as it is likely to be when $T$ is large (see Lemma~\ref{cor: large T}),
then $\zsub{k}(\theta, \usub{k}) $ is likely to be normally distributed and
then Assumption \ref{assum: assumption on additional term} will hold.
\end{remark}

\begin{assumption}\label{ass: perfect proposal}
We follow \cite{Pitt:2012} and assume a perfect proposal for  $\theta$, i.e.
$q_\Theta(\theta|\theta')=\pi(\theta)$. This proposal simplifies
the derivation of the guidelines for the optimal number of samples,
\end{assumption}
Assuming a perfect proposal leads to a conservative choice of the optimal $\sigma$, both in theory and practice,
in the sense that the prescribed number of samples
is larger than optimal for a poor proposal.
However, such a conservative approach is desirable because
the optimal prescription for the choice of $\sigma$ would be based on idealized assumptions that are unlikely to hold in practice.

Lemma~\ref{lemm: asymptotics} shows that the correlation between the estimation errors in the current and proposed values of $(\theta, \bm u ) $  is directly controlled by
$\rho = 1-1/G$ when blocking. This should be compared with CPM where the correlation is specified on the underlying random numbers $\bm{u}$, but the final effect on the estimation errors is less transparent.
\begin{lemma} [Joint asymptotic distribution of $z$ and $z^\prime$] \label{lemm: asymptotics} Suppose that Assumptions~\ref{ass: assumption on variances} and \ref{assum: assumption on additional term}  hold and define $z'=z(\theta,\usub{1:G}^\prime)$ with $\usub{1:G}^\prime \sim \ov \pi(\cdot|\theta)$
and $z = z(\theta,\usub{1:k-1}^\prime, \usub{k}, \usub{k+1:G}^\prime) $ with $\usub{k} \sim p_{\Usub{k}} $ and independent of $\usub{1:G}^\prime$. Let $\rho = 1-1/G$. Then,
\begin{align*}
\begin{pmatrix}
z^\prime \\
z\end{pmatrix} & \sim  \N \begin{pmatrix} \begin{pmatrix} \frac12 \sigma^2 \\ -\frac12\sigma^2 (1-2\rho) \end{pmatrix} ;
 \sigma^2 \begin{pmatrix} 1 & \rho \\ \rho & 1
\end{pmatrix}
\end{pmatrix}.
\end{align*}
Hence,
${\Corr} (z,z')=\rho $.
\end{lemma}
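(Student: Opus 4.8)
The plan is to exploit the fact that both $z'$ and $z$ share a common leave-one-out sum and differ only in a single block term, and to pin down the law of each block error under the two relevant measures by an exponential-tilting argument. First I would record the conditional law of $\usub{1:G}'$: from \eqref{eq:target 2}, conditioning on $\theta$ gives $\ov\pi(\usub{1:G}|\theta)\propto\prod_{k=1}^G\wh L_{(k)}(\theta,\usub{k})\,p_{\Usub{k}}(\usub{k})$, so under $\ov\pi(\cdot|\theta)$ the blocks stay mutually independent, each with density proportional to $e^{\zsub{k}}p_{\Usub{k}}$ --- that is, the prior density tilted by the likelihood-ratio factor $e^{\zsub{k}}$. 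This is the crucial structural observation; everything else is bookkeeping of two means, two variances, and one covariance.

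Next I would fix the per-block marginals. Under $p_{\Usub{k}}$, Assumption~\ref{ass: assumption on variances}(ii) gives $\Var(\zsub{k})=\sigma^2/G$, while unbiasedness ($\E[e^{\zsub{k}}]=1$) together with the normality of the full and leave-one-out sums in Assumption~\ref{ass: assumption on variances}(iii) forces $\E_{p}[\zsub{k}]=-\sigma^2/(2G)$, obtained by subtracting the two sum-level means (each equal to minus half of its own variance). Since exponential tilting of a $\N(-v/2,v)$ law by $e^{\cdot}$ returns a $\N(v/2,v)$ law, every block error has mean $+\sigma^2/(2G)$ and the same variance $\sigma^2/G$ under $\ov\pi(\cdot|\theta)$. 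Writing $S:=z(\theta,\usub{1:k-1}',\usub{k+1:G}')$ for the shared leave-one-out sum, this gives $z'=\zsub{k}(\theta,\usub{k}')+S$ with $\E[z']=\tfrac12\sigma^2$ and $\Var(z')=\sigma^2$, and $z=\zsub{k}(\theta,\usub{k})+S$ with $\E[z]=-\sigma^2/(2G)+(G-1)\sigma^2/(2G)=(G-2)\sigma^2/(2G)$, which equals $-\tfrac12\sigma^2(1-2\rho)$, and $\Var(z)=\sigma^2$. The covariance then falls out: the three quantities $\zsub{k}(\theta,\usub{k}')$, $\zsub{k}(\theta,\usub{k})$ and $S$ are mutually independent, because blocks are independent under $\ov\pi(\cdot|\theta)$ and the fresh draw $\usub{k}\sim p_{\Usub{k}}$ is independent of all of $\usub{1:G}'$; hence $\Cov(z',z)=\Cov(\zsub{k}(\theta,\usub{k}')+S,\;\zsub{k}(\theta,\usub{k})+S)=\Var(S)=(G-1)\sigma^2/G=\rho\sigma^2$, and dividing by $\sqrt{\Var(z)\Var(z')}=\sigma^2$ gives $\Corr(z,z')=\rho$.

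What remains, and what I expect to be the only genuine obstacle, is joint normality of $(z',z)$. Marginal normality of $z$ is Assumption~\ref{assum: assumption on additional term}, and that of $z'$ follows because tilting preserves normality, but marginal normality of each does not by itself give the joint law. I would instead argue through the independent triple $(\zsub{k}(\theta,\usub{k}'),\zsub{k}(\theta,\usub{k}),S)$: if each coordinate is normal, then $(z',z)$, being a fixed linear image of an independent normal vector, is jointly normal with the mean and covariance computed above. The one point needing care is normality of an individual block error rather than of a sum; this is exactly the CLT heuristic behind Assumption~\ref{ass: assumption on variances}(iii), and it can be made rigorous from the stated sum-level normality via Cram\'er's decomposition theorem, since $\zsub{k}$ is an independent summand of the normal full sum $z(\theta,\usub{1:G})$. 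Once the tilting identity is established, the mean, variance and covariance computations are routine.
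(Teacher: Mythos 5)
Your proof is correct and follows essentially the same route as the paper's, which obtains the result directly from Lemma~\ref{lemma: posterior} and Corollary~\ref{corrol: corollar on implications}: decompose $z$ and $z'$ into the shared leave-one-out sum plus the differing block, note that under $\ov\pi(\cdot|\theta)$ each block's error law is the prior law exponentially tilted by $e^{\zsub{k}}$ (so a $\N(-v/2,v)$ becomes $\N(v/2,v)$) while the fresh block keeps its prior law, and read off the means, variances and the covariance $\Var(S)=\rho\sigma^2$. The one place you go beyond the paper is joint normality: the paper's one-line proof only pins down the two marginals via the sum-level normal laws, whereas your argument through the independent triple $(\zsub{k}(\theta,\usub{k}'),\zsub{k}(\theta,\usub{k}),S)$ together with Cram\'er's decomposition theorem for per-block normality actually delivers the bivariate normal law, which is a legitimate tightening rather than a different approach.
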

\subparagraph{Pseudo-marginal based on $z$}\label{SS: PM based on z}
For the rest of the paper we will work with
 the MCMC scheme for $\theta$ and $z$ because the analysis of the original PM scheme based on blocking $\usub{1:G}$ is equivalent to that based $z$, but
 it is simpler to work with $(\theta, z) $. By Lemma~\ref{lemma: lemma on z's}, the target density for $(\theta,z)$ is $\ov \pi(\theta, z):=\exp(z)g_Z(z|\theta) \pi(\theta)$, with the proposal density
 for $z$ conditional on $z^\prime$ given by $\N \bigg  (z; - \frac{\sigma^2}{2G}+ \rho z'  , \sigma^2(1-\rho^2)\bigg )$.

Suppose that we are interested in estimating $\pi(\varphi)=\int\varphi(\t)\pi(\t)d\t$ for some scalar-valued function $\varphi(\t)$ of $\theta$.
Let $\{\theta^{[j]}, z^{[j]},j=1, \dots, M\}$ be the draws obtained from the PM sampler after it has converged,
and let the estimator of $\pi(\varphi)$ be  $\wh\pi(\varphi): =\frac{1}{M}\sum\varphi(\theta^{[j]})$.
We define the inefficiency of the estimator $\wh\pi(\varphi)$ relative to an estimator based
 on an i.i.d. sample from $\pi(\theta)$  as
 \begin{align} \label{eq: IF ratio}
\IF(\varphi, \sigma,\rho) : = \lim_{M \to \infty} M \Var_\text{PM}(\wh\pi(\varphi))/\Var_\pi(\varphi),
\end{align}
where $\Var_\text{PM}(\wh\pi(\varphi))$ is the variance of the estimator $\wh \pi(\varphi)$
 and $\Var_\pi(\varphi):=\E_\pi(\varphi(\t)^2)-[\E_\pi(\varphi(\theta))]^2$ so that $\Var_\pi(\varphi)/M$ is the variance  of the ideal estimator when $\theta^{[j]}\stackrel{iid}{\sim}\pi(\theta)$.
Lemma \ref{lem: IF} in Appendix~\ref{Proofs} shows that under our assumptions the inefficiency $\IF(\varphi, \sigma,\rho) $
is independent of $\varphi$ and is a function only of $\sigma$ and $\rho=1-1/G$. We write it as
$\IF(\sigma,\rho)$ and call it the inefficiency of the PM algorithm, and is a function of $\sigma$ for a given $\rho$.

Similarly to \cite{Pitt:2012}, we define the computing time of the sampler as
\begin{align} \label{eq: CT}
\CT(\sigma,\rho):=\frac{\IF(\sigma,\rho)}{\sigma^{1/\varpi}}.
\end{align}
This definition takes into account the total number of samples needed to obtain a given precision and the mixing rate of the PM chain.
It is justified in Appendix~\ref{app: CT derivation}.

To simplify the notation in this section we often do not show dependence on $\rho$ as it is assumed constant.
In Section~\ref{sec: large sample} we show that if we take $G = O(T^\half)$, then $\rho = 1 - O(T^{-\half})$ and  $N_k = O(T^{1/(4\varpi)})$ are optimal.
The next lemma shows the optimal $\sigma$ under our assumptions as well as the corresponding acceptance rates.
A similar result was previously obtained by \cite{deligiannidis:doucet:pitt:2015} for the correlated PM using MC, i.e., with $\varpi=1/2$.
\begin{lemma}[Optimally tuning BPM]\label{lem: sigma opt}
Suppose that Assumptions~\ref{ass: assumption on variances}-\ref{ass: perfect proposal},
 hold and
 $\rho = 1-1/G$ is fixed and close to 1. Then, the optimal $\sigma$ that minimizes $\CT(\s,\rho)$ is $\s_\opt\approx 2.16/\sqrt{1-\rho^2}$ if $\varpi=1/2$,
and $\sigma_\opt\approx 0.82/\sqrt{1-\rho^2}$ if $\varpi=3/2-\epsilon$ for any arbitrarily small $\eps>0$.
The unconditional acceptance rates (see \eqref{eq:accept uncontional} in the Appendix) under this optimal choice of the tuning parameters are approximately 0.28 (MC) and 0.68, (RQMC) respectively.
\end{lemma}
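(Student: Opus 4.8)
The plan is to collapse the joint chain on $(\theta,z)$ to a scalar chain on $z$, read off a closed form for the inefficiency, and then minimise the computing time after a one-parameter reparametrisation. First I would invoke Assumption~\ref{ass: perfect proposal} together with the description of the $(\theta,z)$ target and proposal in Lemma~\ref{lemma: lemma on z's}. By Assumption~\ref{ass: assumption on variances} the conditional law $g_Z(\cdot\mid\theta)$ is the \emph{same} Gaussian $\N(\cdot;-\half\sigma^2,\sigma^2)$ for every $\theta$ (its mean is pinned by unbiasedness and its variance fixed by construction), so under the perfect proposal $\ov\pi(\theta,z)$ factorises as $\pi(\theta)\,e^{z}g_Z(z)$ and $\theta$ is refreshed as an independent draw from $\pi$ on acceptance. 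The $z$-proposal $\N(z;-\sigma^2/(2G)+\rho z',\sigma^2(1-\rho^2))$ is a Gaussian AR(1) kernel whose invariant law is exactly $g_Z$, hence it is reversible with respect to $g_Z$; inserting this detailed-balance identity into \eqref{eq:acceptance} telescopes the Metropolis--Hastings ratio down to $e^{z-z'}$, so a proposal is accepted with probability $\min(1,e^{z-z'})$.

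Next, because acceptance refreshes $\theta$ while a rejection freezes both $\theta$ and $z$, the lag-$j$ autocorrelation of any $\varphi(\theta)$ equals the probability of $j$ consecutive rejections, and within a rejection run the discarded proposals are conditionally independent given the frozen $z'$. Summing the resulting geometric series gives the representation behind Lemma~\ref{lem: IF}, namely $\IF(\sigma,\rho)=2\,\E_{z'}[1/\ov\alpha(z')]-1$ with $z'\sim\N(\half\sigma^2,\sigma^2)$ and $\ov\alpha(z')=\E[\min(1,e^{z^*-z'})\mid z']$ the conditional acceptance probability (the unconditional rate being $\E_{z'}[\ov\alpha(z')]$). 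The inner expectation is evaluated by the standard formula for $\E[\min(1,e^{D})]$ with $D$ Gaussian, giving $\ov\alpha(z')$ explicitly through $\Phi$. The crux is then to show the dependence on $(\sigma,\rho)$ enters only through $s:=\sigma\sqrt{1-\rho^2}$ as $\rho\to1$: writing $z'=\half\sigma^2+\sigma W$ with $W\sim\N(0,1)$ and $1-\rho=1/G$, the conditional mean of $z^*-z'$ is $-(1-\rho)(z'+\half\sigma^2)$ and its variance is exactly $s^2$, while the $W$-dependent contributions are damped by factors $\sqrt{(1-\rho)/(1+\rho)}\to0$, so $\ov\alpha(z')\to 2\Phi(-s/2)$ free of $z'$ and therefore $\IF(\sigma,\rho)\to I_\infty(s):=\Phi(s/2)/\Phi(-s/2)$. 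From \eqref{eq: CT} this yields $\CT(\sigma,\rho)=I_\infty(s)\,(1-\rho^2)^{1/(2\varpi)}/s^{1/\varpi}$, so minimising over $\sigma$ at fixed $\rho$ is the one-dimensional problem of minimising $h(s)=\Phi(s/2)/[\Phi(-s/2)\,s^{1/\varpi}]$. Solving $h'(s)=0$ numerically gives $s_\opt\approx2.16$ for $\varpi=\half$ and $s_\opt\approx0.82$ for $\varpi=3/2-\eps$, hence $\sigma_\opt=s_\opt/\sqrt{1-\rho^2}$; the limiting unconditional rate $2\Phi(-s_\opt/2)$ then evaluates to $2\Phi(-1.08)\approx0.28$ and $2\Phi(-0.41)\approx0.68$, consistent with Lemma~\ref{lemm: asymptotics}.

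The hard part will be justifying the interchange of limit and expectation in $\IF(\sigma,\rho)=2\,\E_{z'}[1/\ov\alpha(z')]-1$. Pointwise convergence $\ov\alpha(z')\to2\Phi(-s/2)$ is immediate, but the reciprocal is weighted against the right tail of $z'$, where $\ov\alpha(z')$ decays and $1/\ov\alpha(z')$ grows like $\exp\{(1-\rho)z'\}$; I would dominate $1/\ov\alpha(z')$ by an envelope integrable against $\N(\half\sigma^2,\sigma^2)$ and uniform for $\rho$ near $1$ (the Gaussian decay beats the fixed-rate exponential growth), and check that the damping of the $W$-dependence is uniform on the bulk of the $z'$-law, so that dominated convergence applies. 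The remaining scalar minimisation and the acceptance-rate arithmetic are then routine.
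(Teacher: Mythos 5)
Your proposal is correct and follows essentially the same route as the paper: reduce to the $(\theta,z)$ chain with acceptance probability $\min(1,e^{z-z'})$, use the representation $\IF=2\,\E_{z'}[1/k(z')]-1$ from Lemma~\ref{lem: IF} together with the closed-form conditional acceptance probability of Lemma~\ref{lem:acceptance prob}, observe that everything collapses onto $\tau=\sigma\sqrt{1-\rho^2}$ as $\rho\to1$, and minimise $\IF/\tau^{1/\varpi}$ numerically. The only difference is cosmetic: the paper retains $O\!\left(\frac{1-\rho}{1+\rho}\right)$ corrections via a fourth-order Taylor expansion of the integrand about the concentration point, whereas you pass directly to the limit $\IF\to\Phi(\tau/2)/\Phi(-\tau/2)$ (and, to your credit, flag the dominated-convergence step that the paper glosses over); both yield the same optimal $\tau\approx2.16$ and $0.82$ and the same acceptance rates.
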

Let $M$ be the length of the generated Markov chain.
The average number of times that a block $u_{(k)}$ is updated is $M/G$.
In general, $G$ should be selected such that $M/G$ is not too small
so that the space of $z$ is adequately explored.
In the examples in this paper, if not otherwise stated, we set $G=100$,
as we found that the efficiency is relatively insensitive to larger values of $G$.
Lemma \ref{lem: sigma opt} states that if the likelihood is estimated by MC, i.e. $\varpi=1/2$,
then the  optimal variance of the log-likelihood estimator based on each group is
$\sigma_\opt^2/G\approx 2.16^2/(1+\rho)$, which is approximately $ 2.34$ given that $G \approx 100 $
 is large.
For RQMC, the optimal variance of the log-likelihood estimator based on each group is
$\sigma_\opt^2/G\approx 0.82^2/(1+\rho)\approx 0.34$, given that $G $ is large.
Hence, for each group $k$, we propose tuning the number of samples $N_{(k)}=N_{(k)}(\theta)$ such that
$\Var(z_{(k)}|\t,N_{(k)})$ is approximately $2.34$ if the likelihood is estimated by MC
or  $0.34$
if it is estimated by  RQMC.
In many cases, it is more convenient to tune $N_{(k)}=N_{(k)}(\bar\t)$
at some central value $\bar\t$ and then fix $N_{(k)}$ across all MCMC iterations.

\section{Applications}\label{sec:applications}
This section illustrates the methodology with three applications. Appendix~\ref{S: further applications} gives two further applications to Approximate Bayesian Computation (ABC) and to non-Gaussian state space models.
\subsection{Panel-data example\label{sub:PanelData}}

A clinical trial is conducted to test the effectiveness of beta-carotene in preventing
non-melanoma skin cancer \citep{Greenberg:1989}.
Patients were randomly assigned to a control or treatment group
and biopsied once a year to ascertain the number of new skin cancers since the last examination.
The response $y_{ij}$ is a count of the number of new skin cancers in year $j$ for  patient $i$.
Covariates include age, skin (1 if skin has burns and 0 otherwise), gender, exposure (a count of the number of previous
skin cancers), year of follow-up and treatment (1 if
the patient is in the treatment group and 0 otherwise).
There are $T = 1683$ patients with complete covariate information.
We follow \cite{Donohue:2011} and consider the mixed Poisson model with a random intercept
\begin{align*}
p(y_{ij}|\beta,\alpha_i)&=\text{Poisson}(\exp(\eta_{ij})),\quad
\eta_{ij}=\beta_0+\beta_1\text{Age}_{i}+\beta_2\text{Skin}_{i}+\beta_3\text{Gender}_{i}+\b_4\text{Exposure}_{ij}+\alpha_{i},
\end{align*}
where $\alpha_i\sim \N(0,\varrho^2)$, $i = 1,....,T=1683$, $j=1,...,n_i=5$.
The likelihood is
\beqn
L(\t)=\prod_{i=1}^T L_i(\t),\;\; L_i(\t):=p(y_i|\t)=\int\left(\prod_{j=1}^{n_i}p(y_{ij}|\beta,\alpha_i)\right)p(\alpha_i|\varrho^2)d\alpha_i
\eeqn
with $\t=(\beta,\varrho^2)$ the vector of the unknown parameters of the model.

We ran both the optimal independent PM and the optimal block PM for 50,000 iterations, with the first 10,000 discarded as burn-in. We do not compare BPM
to CPM here since it is not clear how to evolve the random numbers when the $N_i$ vary over the iterations. For a fixed sample size for each $i$, we will show in Section~\ref{SS: MC vs RQMC} and, in particular, in Table~\ref{tab:quasi_pseudo}, that block PM performs
 better than correlated PM.
In all our examples the likelihood is estimated using MC using pseudo random numbers if not otherwise stated.
For simplicity, each likelihood $L_i(\t)$ is estimated by importance sampling based on $N_i$ i.i.d. samples from the natural importance sampler $p(\a_i|\varrho^2)$.
For the independent PM, for each $\theta$, the number of samples $N_i=N_i(\theta)$ is tuned so as the variance of the log-likelihood estimator $\Var(\log\wh L_i(\theta))$ is not bigger than $1/T$ (to target the optimal variance of $1$ for the log-likelihood).
This is done as follows.
We start from some small $N_i$ and increase $N_i$ if this variance is bigger than $1/T$.
We note that an explicit expression is available for an estimate of the variance $\Var(\log\wh L_i(\theta))$.
The CPU time spent on tuning $N_i$ is taken into account in the comparison.
In the block PM, we divide the data into $G=99$ groups,
so that each group has 17 panels,
and the variance of the  log-likelihood estimator in each group is tuned to not be bigger than the optimal value of $2.34$; see Lemma~\ref{lem: sigma opt}
and the discussion following it.

As performance measures, we report the acceptance rate,
the integrated autocorrelation time (IACT),
the CPU times, and the time normalised variance (TNV).
For a univariate parameter $\t$, the IACT is estimated by
\beqn
\text{IACT}=1+2\sum_{t=1}^{1000}\widehat\rho_t,
\eeqn
where $\widehat\rho_t$ are the sample autocorrelations.
For a multivariate parameter, we report the average of the estimated IACT's.
The time normalised variance is the product of the IACT and the CPU time
The TNV is proportional to the computing time defined in \eqref{eq: CT}
if the CPU time to generate $N$ samples is proportional to $N$.

Table~\ref{tab:skin cancer} summarises the acceptance rates,
the IACT ratio, the CPU ratio, and the TNV ratio,
using the block PM as the baseline.
The table shows that the block PM outperforms the independent PM.
In particular, the block PM is around 25 times more efficient than the independent PM in terms of the time normalised variance.


\begin{table}[h]
\centering
\vskip2mm
{\small
\begin{tabular}{ccccc}
\hline 
Methods	&Acceptance rate 	&IACT ratio	&CPU ratio	&TVN ratio\\
\hline
IPM&0.222	&1.080		& 23.095		& 24.938\\
BPM&0.243	&1		& 1		& 1 \\
\hline
\end{tabular}
}
\caption{Panel-data example: Comparison the block PM and independent PM  using the block PM as the baseline.}\label{tab:skin cancer}
\end{table}

\subsubsection{Optimally choosing a static number of samples $N$}\label{SS: choosing a static N optimally}
In other applications it may be more costly to select the optimal numbers of samples,  $N_i$,  to estimate $L_i(\theta)$ for any $\theta$ in each PM iteration. We will now investigate the performance of a more easily implemented and less costly static strategy where the $N_i$ are fixed across $\theta$
and are tuned at a central $\bar\theta$ obtained by a short pilot run. We would like to verify that Lemma \ref{lem: sigma opt} still provides a sensible strategy for selecting such a static number of samples.
Because there are 17 panels in each group, given a target group-variance $\sigma^2_G$,
$N_i=N_i(\bar\theta)$ is selected such that
$\Var(\log \wh L_i(\bar\theta))\approx \s^2_G/17$, so that $\Var(\zsub{k})\approx \sigma^2_G$.

Figure \ref{fig:optimal N} shows the average $\bar N=\sum N_i/T$,
IACT and computing time CT = $\bar N\times$ IACT for various group-variance $\sigma^2_G$, when the likelihood is estimated using MC.
The computing time CT is minimised at $\sigma^2_G\approx 2.3$,
which requires 40 samples on average to estimate each $L_i(\theta)$.
In this example we found that CT does not change much when $\sigma^2_G$ lies between 2 and 2.4.
The computing time increases slowly when we choose the $N_i$ such that $\s^2_G$
decreases from its optimal value, but increases dramatically when $\s^2_G$ increases from its optimal value.
To be on the safe side, we therefore advocate a conservative choice of $N_i$ in practice.

\begin{figure*}[h]
\centerline{\includegraphics[width=1\textwidth,height=.4\textwidth]{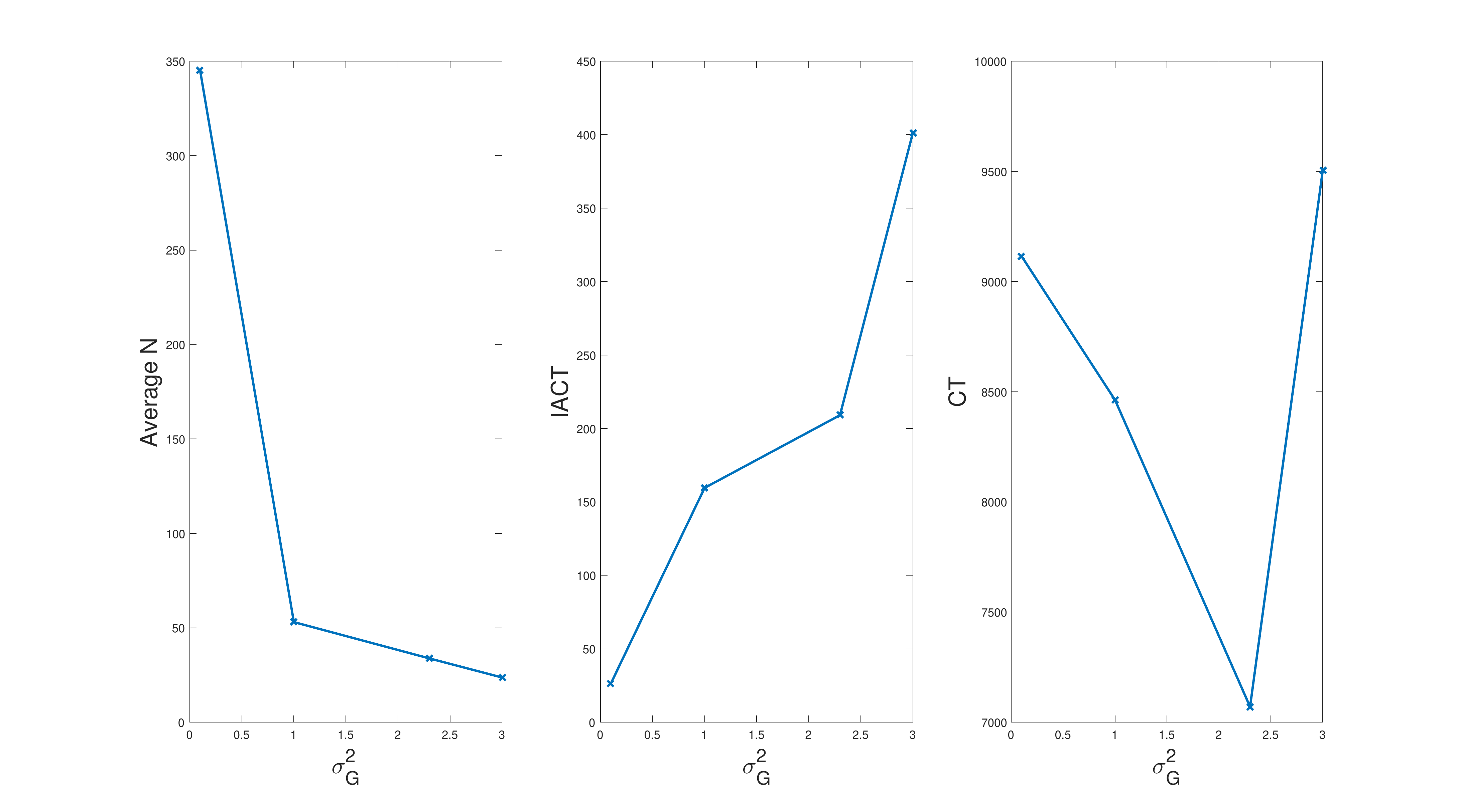}}
\caption{\label{fig:optimal N}
Panel-data example: Average $N_i$ ($\bar N$), IACT and CT = $\bar N\times$ IACT for various target $\sigma_G^2$, using MC.
}
\end{figure*}

We now report results using RQMC to estimate the likelihood,
using the scrambled net algorithm of \cite{Matousek:1998}.
We note that if $L_i(\theta)$ is estimated using RQMC, then the generated scrambled quasi random numbers are dependent
although the estimate $\wh L_i (\theta)$ is still unbiased.
Thus, unlike MC, it is difficult to obtain a closed form expression for an unbiased estimator of the variance of $\wh L_i (\theta)$.
We therefore use replication to estimate each $\Var(\log \wh L_i(\bar\t))$.
Figure \ref{fig:optimal N quasi} shows that CT is minimised at $\s^2_G\approx 0.3$,
which agrees with the theory in Lemma~\ref{lem: sigma opt}.
CT increases slowly when $\s^2_G$ is smaller 0.3, but it increases quickly when $\s^2_G$ is higher than this value.

\begin{figure*}[h]
\centerline{\includegraphics[width=1\textwidth,height=.4\textwidth]{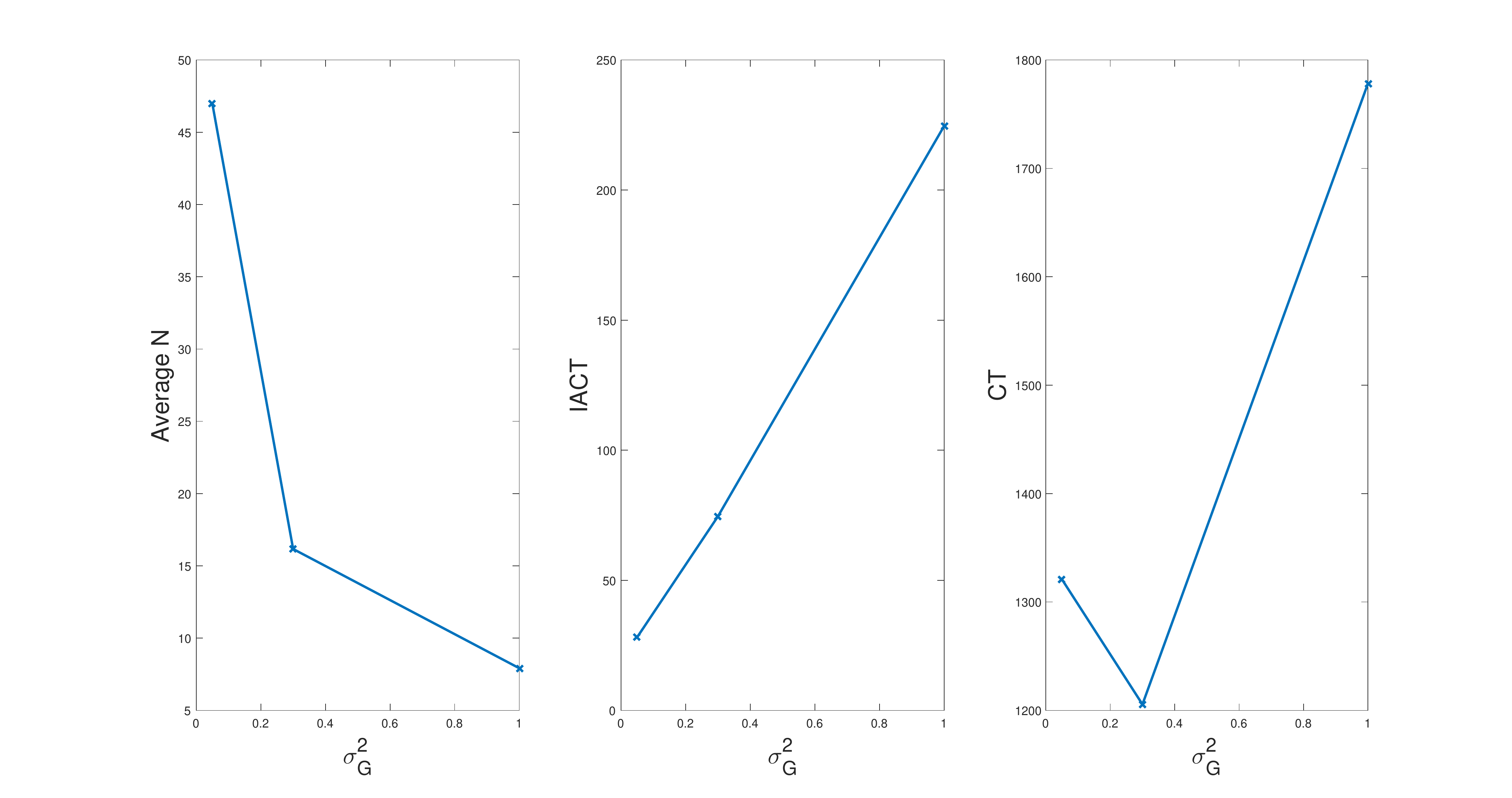}}
\caption{\label{fig:optimal N quasi}
Panel-data example: Average $N_i$ ($\bar N$), IACT and CT = $\bar N\times$ IACT for various $\s_G^2$, using RQMC.
}
\end{figure*}

\subsubsection{MC vs RQMC}\label{SS: MC vs RQMC}
We now compare the performance of the various schemes, and for simplicity use the same number of samples $N$ in all methods. We consider four schemes: independent PM using MC (IPM-MC), correlated PM using MC (CPM-MC), block PM using MC (BPM-MC), and block PM using RQMC (BPM-RQMC). We set $N_i=50$ across all $\theta$ and $i$, and verified that this was enough for both CPM and BPM chains to converge. The IPM-MC chain is unlikely to converge in this setting, as it requires a much larger $N$.

Table~\ref{tab:quasi_pseudo} summarises the performance measures using the BPM-RQMC as the baseline.
The two block PM frameworks outperform both IPM-MC and CPM-MC. BPM-MC is somewhat faster than BPM-RQMC, but BPM-RQMC is much more efficient and has three times lower TNV compared to BPM-MC.

\begin{table}[h]
\centering
\vskip2mm
{\small
\begin{tabular}{ccccc}
\hline 
Methods						&Acceptance rate 	&IACT ratio	&CPU ratio	&TNV ratio\\
\hline
IPM-MC				&0.002			&12.005		&1.124    		& 13.493\\
CPM-MC				&0.081			&5.273		&1.133    		& 5.974\\
BPM-MC	      &0.179 			&4.121 		&0.742		     &3.057 \\
BPM-RQMC		  &0.225				&1				&1				& 1 \\
\hline
\end{tabular}
}
\caption{Panel-data example: comparison of independent PM using MC, block PM using MC, and block PM using RQMC. The BPM-RQMC is used as the baseline. }\label{tab:quasi_pseudo}
\end{table}


\subsection{Data subsampling example}\label{SS: data subsampling}
\cite{quiroz:villani:kohn::2016} propose a data subsampling approach to Bayesian inference to speed up MCMC when the likelihood  can be computed.
The subsampling approach expresses the log-likelihood as a sum of terms and estimates it unbiasedly by summing a sample of the terms using
control variates and simple random sampling. The unbiased log-likelihood estimator is converted to a slightly biased likelihood estimator in \cite{quiroz:villani:kohn::2016} such that the PM targets a slightly perturbed target posterior. See also \cite{Quiroz:2016} for an alternative unbiased estimator. \cite{quiroz:villani:kohn::2016}
use both the correlated PM and the block PM
to carry out the estimation. For the block PM, $N_{(k)} = N_{k} =  T/G$.

We illustrate the subsampling approach of \cite{quiroz:villani:kohn::2016} and compare the block PM to the correlated PM using the following two $\mathrm{AR}(1)$ models with Student-t iid errors $\epsilon_{t}\sim t(\nu)$ with known degrees of freedom $\nu$. These examples are also used in their paper. The two models are
$
{\rm M}_1:\enspace y_{t}  = \beta_{0}+\beta_{1}y_{t-1}+\epsilon_{t},q$ with $ \theta=(\beta_{0}=0.3,\beta_{1}=0.6) $ and
${\rm M}_2:\enspace y_{t}  = \mu+\rho(y_{t-1}-\mu)+\epsilon_{t}$, with $  \theta=(\mu=0.3,\rho=0.99)$,
for $t=1,...,T$, where $p(\epsilon_{t})\propto(1+\epsilon_{t}^{2}/\nu)^{-(\nu+1)/2}$
with $\nu=5$. Our aim is not to compare the two models, but to investigate the behaviour of BPM and CPM when data are generated from respective model.
We use the same priors as in \cite{quiroz:villani:kohn::2016}:
$  M_1: p(\beta_{0},\beta_{1})=\mathcal{U}(-5,5)\cdot\mathcal{U}(0,1) \quad \text{and} \quad   M_2: p(\mu,\rho)=\mathcal{U}(-5,5)\cdot\mathcal{U}(0,1)$,
where $\mathcal{U}(a,b)$ means a uniform density on the interval $(a,b)$.

Define $\ell_t(\theta):=\log p(y_t | y_{t-1}, \theta)$ and rewrite the log-likelihood $\ell(\theta)$ as
\begin{align*}
\ell(\theta) & = q(\theta) + d(\theta), \quad q(\theta)=\sum_{t=1}^T q_t(\theta), \quad d(\theta)=\sum_{t=1}^T  d_t(\theta), \text{ with } d_t(\theta)= \ell_t(\theta)-q_t(\theta),
\end{align*}
where $q_t(\theta)\approx \ell_t(\theta)$ is a control variate. We take $q_{t}(\theta)$ as a second order Taylor series approximation of $l_{t}(\theta)$ evaluated at the nearest centroid from a clustering in data space. This reduces the complexity of computing $q(\theta)$ from $O(T)$  to $O(C)$, where $C$ is the number of centroids.
See \cite{quiroz:villani:kohn::2016} for the details.
An unbiased estimate of $\ell(\theta)$ based on a simple random sample with replacement is
\begin{align}
\widehat{\ell}(\theta) & = \wh d(\t)+ q(\theta),\label{eq:loglikelihoodEstsubsampling}
\end{align}
where
\[\wh d(\t)=\frac{T}{N}\sum_{i=1}^{N}d_{u_{i}}(\theta), \quad \text{with }\;  u_i \in \{1, \dots, T\},\;\;\P(u_i = t) = \frac{1}{T},\quad t=1,\dots, T.\]
Here $N$ is the subsample size and $u=(u_1,...,u_N)$ represents a vector of observation indices.
Write $\wh{d}(\t)$ as a sum of $G$ blocks
\[
\wh{d}=\wh{d}^{(1)}+\cdots+\wh{d}^{(G)},\quad\text{with}\quad \wh{d}^{(k)}=\frac{T}{N}\sum_{i\in\mathcal{I}_{k}}d_{u_{i}},
\]
where $\mathcal{I}_{k}$ with $|\mathcal{I}_{k}|=N_{(k)}$ contains
the indices of the auxiliary variables corresponding to the $k$th block.
We assume that the $N_{(k)}$ are the same for all $k$ and $N=G\times N_{(k)}$.
Let $\s^2(\t)=\Var(\wh l(\t))=(T/N)\sum_{t=1}^T(d_i(\t)-\bar d(\t))^2$ with $\bar d(\t)=\sum d_i(\t)/T$.
Notice that $\E[\wh{d}^{(k)}]=d(\theta)/{G}$ and $\Var[\wh{d}^{(k)}]={\sigma^{2}}/{G}$.
Using the result that if $\wh d\sim\N(d,\sigma^2/2)$, we have that
 $\E[\exp(q(\theta)+\wh d(\theta)-\sigma^2(\theta)/2)]=\exp(l(\theta))$,
\cite{quiroz:villani:kohn::2016} work with the likelihood estimate
\begin{align}\label{eq: approx estimate of likelihood}
\wh L(\t,u)&=\exp(q(\theta))\exp\left(\wh{d(\theta)}-\frac{\wh\sigma^{2}(\theta)}{2}\right) =
  \exp(q(\theta))\prod_{k=1}^{G}\exp\left(\wh{d(\theta) }^{(k)}-\frac{\wh\sigma^{2}(\theta)}{2G}\right),
\end{align}
where $\widehat{\sigma}^2(\theta)$ is an unbiased estimate of $\s^2(\t)$, because computing $\sigma^2(\theta)$, to obtain an unbiased estimator of the likelihood,
is expensive and defeats the purpose of subsampling.
\cite{quiroz:villani:kohn::2016} show that carrying out the PM with this slightly biased
likelihood estimator samples from a perturbed posterior that is very close to the full-data posterior under quite general conditions.

We generated $T= 100, 000$ observations from the models in ${\rm M}_1 $ and ${\rm M}_2 $ and ran both the correlated PM and the block PM for $55,000$ iterations from which we discarded the first 5,000 draws as burn-in. Using the same target for $\sigma^2(\theta)$
as in \cite{quiroz:villani:kohn::2016} results in sample sizes $N\approx 1300$ for model $M_1$ and $N\approx 2600$ for model $M_2$.
For the block PM we use $G=100$. Also, following  \cite{quiroz:villani:kohn::2016}, the correlation parameter in the
correlated PM is set to $\varrho=0.9999$, and we use a random walk proposal which is adapted during the burn-in phase to target an
 acceptance rate of approximately $0.15$ \citep{Sherlock:2015}.

Table~\ref{tab:sub sampling} summarises the performance measures introduced in Section~\ref{sub:PanelData}. It is evident that the block
PM significantly outperforms the correlated PM in terms of CPU time and TNV. This is because, as discussed above, the correlated
PM requires $N$ operations for generating the vector $\bm u$. The block PM moves only one block at a time, so that the update of the vector $u$ requires $N/G$ operations.

\begin{table}[h]
\centering
\vskip2mm
{\small
\begin{tabular}{ccccccccc}
  \hline
   Methods & \multicolumn{2}{r}{Acceptance rate} & \multicolumn{2}{c}{IACT ratio} & \multicolumn{2}{c}{CPU ratio} & \multicolumn{2}{c}{TNV ratio} \\ \hline
  ~   & $\mathrm{M}_1$ & $\mathrm{M}_2$ & $\mathrm{M}_1$ & $\mathrm{M}_2$ & $\mathrm{M}_1$ & $\mathrm{M}_2$ & $\mathrm{M}_1$ & $\mathrm{M}_2$ \\ 
    CPM    & 0.149 & 0.140 & 1.110 & 1.124 & 62.893 & 38.610 & 69.444 & 43.478 \\ 
    BPM    & 0.160 & 0.151 & 1 & 1 & 1 & 1 & 1 & 1 \\ 

 \hline
\end{tabular}
}
\caption{Data subsampling example using block PM as a baseline.}\label{tab:sub sampling}
\end{table}

\subsection{Diffusion process example}\label{sub:Diffusion process example}
This section applies the PM approaches to estimate the parameters of the  diffusion process
$\bm X=\{X_t,t\geq 0\}$ governed by the stochastic differential equation (SDE)
\beq\label{eq:SDE}
dX_t=\mu(X_t,\t)dt+\s(X_t,\t)dW_t,
\eeq
with $W_t$ a Wiener process. We assume that the regularity conditions on $\mu(\cdot,\cdot)$ and $\sigma(\cdot,\cdot)$ are met
so that the solution to the SDE in \eqref{eq:SDE} exists and is unique. We are interested in estimating the vector of parameters $\theta$
based on discrete-time observations $x=\{x_0,x_1,...,x_n\}$, where $x_i$ is the observation of $X_{i\Delta}$ with $\Delta$ some time-interval. The likelihood is
\[L(\t):=p(x|\t)=\prod_{i=0}^{n-1}p_\Delta(x_{i+1}|x_i,\t),
\]
 where the $\Delta$-interval Markov transition density $p_\Delta(x_{i+1}|x_i,\t)$ is typically intractable.
 In order to make the discrete approximation of the continuous-time process $X$ sufficiently accurate,
we follow  \cite{Stramer:2011} and write $p_\Delta(x_{i+1}|x_i,\theta)$ as
\beq\label{eq:p_delta}
p_\Delta(x_{i+1}|x_i,\t)=\int p_h(x_{i+1}|z_{i,M-1},\t)p_h(z_{i,M-1}|z_{i,M-2},\t)\cdots p_h(z_{i,1}|x_{i},\t)\d z_{i,1}\cdots \d z_{i,M-1},
\eeq
where  $p_h(\cdot|\cdot,\theta)$ is the Markov transition density of $X$ after time-step $h=\Delta/M$.
The Euler approximation
\[p_h^\text{euler}(u|v,\t)=\N\Big(u;v+h\mu(v,\t),h\Sigma(v,\t)\Big),\]
with $\Sigma(v,\t)=\s(v,\t)\s'(v,\t)$, is a very accurate approximation to
$p_h(u|v,\theta)$  if $h$  is sufficiently small.
We approximate the transition density in \eqref{eq:p_delta} by
 \[p_\Delta^\text{euler}(x_{i+1}|x_i,\t)=\int p_h^\text{euler}(x_{i+1}|z_{i,M-1},\t)p_h^\text{euler}(z_{i,M-1}|z_{i,M-2},\t)\cdots p_h^\text{euler}(z_{i,1}|x_{i},\t)\d z_{i,1}\cdots \d z_{i,M-1},\]
and follow \cite{Stramer:2011} and define the working likelihood as
\[L^\text{euler}(\t)=\prod_{i=0}^{n-1}p_\Delta^\text{euler}(x_{i+1}|x_i,\t).\]
The posterior density of $\theta$ is then
$p^\text{euler}(\theta|x)\propto p_\Theta(\t)L^\text{euler}(\t).$
The likelihood $L^\text{euler}(\t)$ is intractable, but can be estimated unbiasedly.
As in \cite{Stramer:2011}, we estimate $p_\Delta^\text{euler}(x_{i+1}|x_i,\t)$  using the importance sampler of \cite{Durham:2002},
\[z_{i,m+1}\sim \N\left(z_{i,m}+\frac{x_{i+1}-z_{i,m}}{M-m},h\frac{M-m-1}{M-m}\Sigma(z_{i,m},\t)\right),\;\;m=0,...,M-2,\]
where $z_{i,0}=x_i$.
The density of this importance distribution is
\[g(z_i)=g(z_{i,1},...,z_{i,M-1})=\prod_{m=0}^{M-2}\N\left(z_{i,m+1};z_{i,m}+\frac{x_{i+1}-z_{i,m}}{M-m},h\frac{M-m-1}{M-m}\Sigma(z_{i,m},\t)\right).\]
We sample $N$ such trajectories $z_{i}^{(j)}=(z_{i,1}^{(j)},...,z_{i,M-1}^{(j)})$, $j=1,...,N$
and denote by $\bm u_i$ the set of all required MC random numbers, $i=0,...,n-1$.
Then, the unbiased estimator of $p_\Delta^\text{euler}(x_{i+1}|x_i,\t)$ is
 \[\wh p_\Delta^\text{euler}(x_{i+1}|u_i,x_i,\t)=\frac1N\sum_{j=1}^N\frac{p_h^\text{euler}(x_{i+1}|z_{i,M-1}^{(j)},\t)p_h^\text{euler}(z_{i,M-1}^{(j)}|z_{i,M-2}^{(j)},\t)\cdots . p_h^\text{euler}(z_{i,1}^{(j)}|x_{i},\t)}{g(z_{i}^{(j)})}
 \]
The working likelihood $L^\text{euler}(\t)$ factorises as in \eqref{eq:panel data llh} and is estimated unbiasedly,
so all the theory developed in Sections~\ref{SS: analysis of block} and \ref{sec: large sample} applies here as well.

We apply the proposed method to fit the FedFunds dataset to the Cox-Ingersoll-Ross (CIR) model
\[dX_t=\beta(\a-X_t)dt+\s\sqrt{X_t}dW_t, \]
using MC pseudo random numbers.
The FedFunds dataset we use consists of 745 monthly federal funds rates in the US from July 1954 to August 2016,
downloaded~from~Yahoo~Finance  (https://au.finance.yahoo.com/).

We follow \cite{Stramer:2011} and set $\Delta=1/12$ and also use the prior
\begin{align*}
p_\T(\t)=I_{(0,1)}(\a)I_{(0,\infty)}(\b)\s^{-1}I_{(0,\infty)}(\s)
\end{align*}
where $I_{(a,b)}(x) = 1$ if $x \in (a,b)$ and 0 otherwise.

We take $M=300$ to make the Euler approximation highly accurate, \cite{Stramer:2011} use $M=20$.
We use $N=1$ samples and $G=186$ groups so that $u_{(k)}=\{u_{3(k-1)},u_{3(k-1)+1},u_{3(k-1)+2}\}$, $k=1,...,G$.
Table \ref{tab:fedfunds} summarises the results,
which show that the block PM performs better than the independent PM.
\cite{Stramer:2011} report that their blocking strategy does not work better than the independent PM.
There are three reasons for this different conclusion. First, \citeauthor{Stramer:2011}'s dataset consists of 432 monthly rates from January 1963 to December 1998,
which is a little over half of our dataset. Second, they set $M=20$ and $N=5$ while we set $M=300$ and $N=1$.
For both these reasons, the estimate of the log likelihood in their problem has a variance that is small and less than 1 and hence our theory
predicts that the independent PM will be as good as the block PM, and shows the value of our theoretical guidelines. The variance of the log of the likelihood
estimate in our setting is much greater than 1 so that our setting is much more challenging for the independent PM
because the estimates of the likelihood are highly variable.
Third, \cite{Stramer:2011} use a MCMC scheme that treats $\theta$ and the $G$ blocks
$\usub{1}$,...,$\usub{G}$ as $G+1$ blocks that are generated one at a time conditional on all the other blocks.
Our MCMC scheme updates $\theta$ and one of the $\usub{i}$ jointly in each iteration.
\begin{table}[h]
\centering
\vskip2mm
{\small
\begin{tabular}{ccccc}
\hline 
Methods						&Acceptance rate	&IACT ratio	&CPU ratio	&TNV ratio\\
\hline
IPM				&0.049			&9.059		&1.154    		& 10.45\\
BPM				&0.258 			&1				&1				& 1 \\
\hline
\end{tabular}
}
\caption{Diffusion process example: Comparing the independent PM (IPM) and the block PM (BPM)  using the block PM as the baseline.}\label{tab:fedfunds}
\end{table}

\section{Conclusion}\label{Sec:Conclusion}

\cite{deligiannidis:doucet:pitt:2015} show how the PM approach can be made much more
efficient by correlating the underlying Monte Carlo (MC) random numbers used to form the estimate of the likelihood at the current and
proposed values of the unknown parameters. Their approach greatly speeds up the standard PM algorithm,
as it requires a much smaller number of samples or particles to form the optimal likelihood estimate.
Our paper presents an alternative implementation of the correlated PM approach, called the
block PM, which divides the underlying random numbers into blocks so that the likelihood estimates for the
proposed and current values of the parameters only differ by the random numbers in one block.
We show that this implementation of the correlated PM can be much more efficient for some specific problems
than the implementation in \cite{deligiannidis:doucet:pitt:2015}; for example
when the likelihood is estimated by subsampling or the likelihood is a product of terms each of which is given by an integral which
can be estimated unbiasedly by randomised quasi-Monte Carlo. Using stylized but realistic assumptions
the article also provides methods and guidelines for implementing the block PM efficiently.
As already discussed, we have successfully implemented the block PM in several  applications and shown that it results
in greatly improved performance of the PM sampler.
A second advantage of the the block PM is that it provides a direct way to control the correlation between the logarithms of the estimates of the
likelihood at the current and proposed values of the parameters than the implementation in \cite{deligiannidis:doucet:pitt:2015}.
We obtain methods and guidelines for selecting the optimal number of samples based on idealized but realistic assumptions.
Finally, we believe that in future applications CPM can be combined with BPM to produce efficient PM algorithms.

\section*{Acknowledgement} We would like to thank Mike Pitt for useful discussions and in particular a  version of
Lemma~\ref{lemma:uncorrelated in n}.
Robert Kohn and Matias Quiroz were partially supported
by an Australian Research Council Center of Excellence Grant CE140100049.
Villani was partially supported by Swedish Foundation for Strategic Research (Smart Systems: RIT 15-0097)

\bibliographystyle{apalike}
\bibliography{references}

\clearpage
\renewcommand{\theequation}{S\arabic{equation}}
\renewcommand{\thesection}{S\arabic{section}}
\renewcommand{\theproposition}{S\arabic{proposition}}
\renewcommand{\theassumption}{S\arabic{assumption}}
\renewcommand{\thelemma}{S\arabic{lemma}}
\renewcommand{\thecorollary}{S\arabic{corollary}}
\renewcommand{\thealgorithm}{S\arabic{algorithm}}
\renewcommand{\thefigure}{S\arabic{figure}}
\renewcommand{\thetable}{S\arabic{table}}
\renewcommand{\thepage}{S\arabic{page}}
\renewcommand{\thetable}{S\arabic{table}}
\renewcommand{\thepage}{S\arabic{page}}

\setcounter{page}{1}
\setcounter{section}{0}
\setcounter{equation}{0}
\setcounter{algorithm}{0}
\setcounter{table}{0}
\setcounter{lemma}{0}
\setcounter{figure}{0}

\section*{Online Supplement to \lq The Block Pseudo-Marginal Sampler\rq }
\appendix

\section{Proofs}\label{Proofs}
\begin{proof}[Proof of Lemma \ref{lemma: accept prob}]
We will show that
\begin{align} \label{eq: ratio}
\left ( \prod_{i=1}^G p_{\Usub{i}} (\usub{i}) \right )  q(\d \usub{1:G}^\prime|\usub{1:G})& =
\left (  \prod_{i=1}^G p_{\Usub{i}} (\usub{i}^\prime) \right) q(\d \usub{1:G}|\usub{1:G}^\prime)
\end{align}
Define the measure
\begin{align*}
\nu_j(\d \bm u_{(1:G)}, \d \bm u^\prime_{(1:G)}):= p_{U_{(j)}} (\bm u_{(j)})\d \bm u_{(j)} \prod_{k\neq j }\left (  \delta_{\bm u^\prime_{(k)} } (\d \bm u_{(k)} )
p_{U_{(k)}} (\bm u_{(k)})\d \bm u_{(k)}.
\right )
\end{align*}
It is straightforward to show that $\nu_j(\d \bm u_{(1:G)}, \d \bm u^\prime_{(1:G)})= \nu_j(\d \bm u^\prime_{(1:G)}, \d \bm u_{(1:G)})$ by showing that for any integrable
function $h(\bm u_{(1:G)}, \bm u^\prime_{(1:G)})$ with respect to $\nu_j(\d \bm u_{(1:G)}, \d \bm u^\prime_{(1:G)})$ we will have that
\begin{align*}
\int h(\bm u_{(1:G)}, \bm u^\prime_{(1:G)}) \nu_j(\d \bm u_{(1:G)}, \d \bm u^\prime_{(1:G)}) = \int  h(\bm u_{(1:G)}, \bm u^\prime_{(1:G)}) \nu_j(\d \bm u^\prime_{(1:G)}, \d \bm u_{(1:G)}).
\end{align*}
The result of the lemma now follows.
\end{proof}

It is useful to have the following definitions and results to obtain Lemmas~\ref{lemm: asymptotics} and \ref{lem: sigma opt}.
For any $\theta \in \Theta$, $\usub{i} \in \mathbb{U}_i$, $i=1, \dots, G$, we define
$\zsub{k}$, $z(\theta,\usub{1:G})$ and $z(\theta,\usub{i_1:i_k})$ as in Section~\ref{SS: pm based on z's}, and
$\zvec (\theta, \usub{1:G}):= \left (\zsub{1}(\theta,\usub{1}), \dots,\zsub{G}(\theta,\usub{G})\right )^{\rm T}$.

For $j=1, \dots, G$, let $\gsub{j}(\zsub{j}|\theta)$ be the density of $\zsub{j}$ when $\usub{j} $ has density $p_{\Usub{j} } (\cdot ) $,
and let  $g_Z(z|\theta)$ be the corresponding density of $z$.
The following lemma is a
straightforward generalization of the approach in \cite{Pitt:2012}.
\begin{lemma}\label{lemma: lemma on z's}
If the $\usub{j}$ are independent, each with density $p_{\Usub{j} } (\cdot ) $, for $j=1, \dots, G$, then
\begin{enumerate}
\item [(a)]
$\int \exp( \zsub{j} ) \gsub{j}(\zsub{j}|\theta)\d \zsub{j}  = 1 \quad \text{and} \quad \int \exp( z ) g_Z(z|\theta)\d z  = 1.$
\item [(b)]
$\ov \pi (\theta, \usub{1}, \dots, \usub{G} )= \prod_{j=1}^G \exp(z(\theta, \usub{j} ))p_{\Usub{j} } ( \usub{j})\pi(\theta),$
so that
\[\ov \pi (\usub{1:G} |\theta )= \prod_{j=1}^G \exp(z(\theta, \usub{j}))p_{\Usub{j} } ( \usub{j}).\]
Hence, conditional on $\theta$, the
$\usub{j}$ are independent in the posterior and have densities $\exp(z(\theta, \usub{j})) p_{ \Usub{j} } ( \usub{j})$.
\item [(c)]
$\ov \pi ( \zsub{1} , \dots, \zsub{G}|\theta)  =  \prod_{j=1}^G   \exp  ( \zsub{j}) \gsub{j}(\zsub{j}|\theta ) $
 so that, conditional on $\theta$,  $z_{(1)} , \dots, z_{(G)}$ are independent in the posterior with $z_{(j)}$ having density
$\exp  (\zsub{j}) \gsub{j}(\zsub{j}|\theta)$.
\item [(d)]
$\ov \pi(z|\theta)= \exp(z)g_Z(z|\theta)$ so that $\ov \pi(z,\theta) = \ov \pi(z|\theta)\pi(\theta) $.
 \end{enumerate}
\end{lemma}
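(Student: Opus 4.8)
The plan is to prove the four parts in order, deriving each from its predecessors together with the unbiasedness of the block estimators. Throughout I would write $\exp(\zsub{j})=\wh L_{(j)}(\theta,\usub{j})/L_{(j)}(\theta)$, which is the defining relation for $\zsub{j}$, and $\exp(z)=\wh L(\theta,\bm u)/L(\theta)$.

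For part (a), I would note that $\int \exp(\zsub{j})\gsub{j}(\zsub{j}|\theta)\d\zsub{j}$ is precisely the expectation of $\exp(\zsub{j})$ when $\usub{j}\sim p_{\Usub{j}}(\cdot)$, since $\gsub{j}(\cdot|\theta)$ is by definition the density of $\zsub{j}$ under $p_{\Usub{j}}$. This expectation equals $\E_{\usub{j}}[\wh L_{(j)}(\theta,\usub{j})]/L_{(j)}(\theta)=1$ by the unbiasedness of $\wh L_{(j)}$. The identity for $z$ then follows identically from the unbiasedness of $\wh L=\prod_k \wh L_{(k)}$, or equivalently by independence, since $\E[\exp(z)]=\prod_{j=1}^G \E[\exp(\zsub{j})]=1$.

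For part (b), I would substitute $\wh L(\theta,\usub{1:G})=\prod_{j=1}^G \wh L_{(j)}(\theta,\usub{j})=\big(\prod_{j=1}^G \exp(\zsub{j})\big)L(\theta)$ into the definition \eqref{eq:target 2} of $\ov\pi(\theta,\usub{1:G})$ and collect the factor $p_\Theta(\theta)L(\theta)/\ov L=\pi(\theta)$, which gives the claimed product form. Dividing by $\pi(\theta)$ yields $\ov\pi(\usub{1:G}|\theta)=\prod_{j=1}^G \exp(\zsub{j})p_{\Usub{j}}(\usub{j})$; this is a legitimate conditional density because part (a) guarantees it integrates to one, and its product structure establishes the conditional independence of the $\usub{j}$.

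Parts (c) and (d) are where care is needed, and I expect the change of variables from $\usub{j}$ to $\zsub{j}$ to be the main obstacle, since this map is in general many-to-one rather than a diffeomorphism. The crucial observation is that the posterior density of $\usub{j}$ from part (b) depends on $\usub{j}$ only through $\zsub{j}(\theta,\usub{j})$, via the factor $\exp(\zsub{j})$. Hence for any test function $\phi$, writing $\psi(\zsub{j}):=\phi(\zsub{j})\exp(\zsub{j})$ as a function of $\zsub{j}$ alone, the posterior expectation $\int \phi(\zsub{j})\exp(\zsub{j})p_{\Usub{j}}(\usub{j})\d\usub{j}=\int \psi(\zsub{j})\gsub{j}(\zsub{j}|\theta)\d\zsub{j}$ by the definition of $\gsub{j}$ as the push-forward of $p_{\Usub{j}}$ under $\usub{j}\mapsto\zsub{j}$. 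This identifies $\exp(\zsub{j})\gsub{j}(\zsub{j}|\theta)$ as the posterior density of $\zsub{j}$; taking the product over $j$ (using the independence from part (b)) gives part (c). Part (d) follows by applying the identical argument to $z=\zsub{1}+\cdots+\zsub{G}$ and to the joint law $p_U=\prod_j p_{\Usub{j}}$, whose push-forward under $\bm u\mapsto z$ is $g_Z(\cdot|\theta)$ by definition; finally $\ov\pi(z,\theta)=\ov\pi(z|\theta)\pi(\theta)$ holds because the marginal of $\theta$ under $\ov\pi$ is $\pi(\theta)$, the defining property of the pseudo-marginal target established around \eqref{eq:target 1}.
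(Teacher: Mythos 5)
Your proof is correct, and since the paper itself gives no proof of this lemma (it is merely flagged as ``a straightforward generalization of the approach in \cite{Pitt:2012}''), your argument simply supplies the details the authors left implicit: unbiasedness of $\wh L_{(j)}$ for part (a), substitution of $\wh L = \prod_j \exp(\zsub{j})L_{(j)}$ into \eqref{eq:target 2} for part (b), and the push-forward identity for parts (c)--(d). Your explicit handling of the non-injective map $\usub{j}\mapsto\zsub{j}$ via test functions is the right way to make the change of variables rigorous, and is exactly the device used in \cite{Pitt:2012}.
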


\subparagraph{Pseudo-marginal MCMC based on $\zvec$}
Consider now the hypothetical pseudo-marginal MCMC sampling scheme on $\zvec $ with block proposal density for $\zvec$, conditional on $\theta$, given by
\begin{align} \label{eq: block proposal z}
q_Z(\zvec|\zvec^\prime, \theta):= \sum_{i=1}^G \omega_i g_i(\zsub{i}|\theta)\prod_{j\neq i} \delta_{\zsub{j}^\prime }(\d \zsub{j})
\end{align}
with $\omega_i = 1/G$. The proposal for $\theta$ is as above. Lemma~\ref{lemma: PM on vecz} below
shows that studying the optimality properties of the PM simulation based on $\theta$ and $\bm u $ is equivalent to studying it for
$\theta$ and $\zvec$. Although the PM based on the $\zvec$ is only \lq hypothetical\rq, as we usually cannot compute it,
we show below that it is more convenient to work with $\zvec$.
\begin{lemma} \label{lemma: PM on vecz}
\begin{enumerate}
\item [(a)] The acceptance probability \eqref{eq:acceptance 2} can be written as
\begin{align}\label{eq: accept prob u's}
\min\left \{  1,
\exp \left ( z(\theta, \usub{1:k-1}^\prime,\usub{k} , \usub{k+1:G}^\prime ) - z(\theta^\prime, \usub{1:G}^\prime) \right)
\frac{\pi(\theta) }{ \pi(\theta^\prime) }
\frac{q_\Theta(\theta^\prime|\theta)} {q_\Theta(\theta|\theta^\prime)}
\right \}
\end{align}
\item [(b)]
The acceptance probability of a PM scheme based on  $\zvec$ with proposal \eqref{eq: block proposal z} is equal to \eqref{eq: accept prob u's}.
Under Assumption~\ref{ass: perfect proposal}, it becomes
$
\min \left \{ 1, \exp(z- z^\prime) \right \} .
$
 \end{enumerate}
\end{lemma}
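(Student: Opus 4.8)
The plan is to prove the two parts in turn, part (a) being essentially algebraic and part (b) requiring care with the conditioning on $\theta$.

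For part (a), I would start from the acceptance probability \eqref{eq:acceptance 2} of Lemma~\ref{lemma: accept prob} and re-express every likelihood estimate through its log-error. Since $\wh L(\theta,\bm u)=\prod_{j=1}^G\wh L_{(j)}(\theta,\usub{j})$ and, by the very definition of $\zsub{j}$, we have $\wh L_{(j)}(\theta,\usub{j})=L_{(j)}(\theta)\exp(\zsub{j}(\theta,\usub{j}))$, multiplying over $j$ gives $\wh L(\theta,\bm u)=L(\theta)\exp(z(\theta,\bm u))$. Substituting this into the numerator $\wh L(\theta,\usub{1:k-1}^\prime,\usub{k},\usub{k+1:G}^\prime)$ and the denominator $\wh L(\theta^\prime,\usub{1:G}^\prime)$ of \eqref{eq:acceptance 2}, and using $p_\Theta(\theta)L(\theta)=\ov L\,\pi(\theta)$ so that the factor $1/\ov L$ cancels, the likelihood ratio collapses to $\exp\!\big(z(\theta,\usub{1:k-1}^\prime,\usub{k},\usub{k+1:G}^\prime)-z(\theta^\prime,\usub{1:G}^\prime)\big)\,\pi(\theta)/\pi(\theta^\prime)$. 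Combined with the unchanged proposal ratio this is exactly \eqref{eq: accept prob u's}. This step is routine.

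For part (b), the plan is to write the Metropolis--Hastings ratio of the hypothetical chain on $(\theta,\zvec)$ directly. By Lemma~\ref{lemma: lemma on z's}(c) the target factorises as $\ov\pi(\theta,\zvec)=\pi(\theta)\prod_{j=1}^G\exp(\zsub{j})\gsub{j}(\zsub{j}|\theta)$, and under the block-$k$ proposal \eqref{eq: block proposal z} the only coordinate of $\zvec$ that changes is $\zsub{k}$, drawn from $\gsub{k}(\cdot|\theta)$, while the remaining $\zsub{j}^\prime$ are held fixed by the delta factors. The key observation is the cancellation coming from the conditional-independence structure of Lemma~\ref{lemma: lemma on z's}: in the forward/reverse proposal ratio the newly drawn density $\gsub{k}(\zsub{k}|\theta)$ cancels the $\gsub{k}$ factor of $\ov\pi(\theta,\zvec)$, and symmetrically $\gsub{k}(\zsub{k}^\prime|\theta^\prime)$ cancels the corresponding factor of $\ov\pi(\theta^\prime,\zvec^\prime)$. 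After these cancellations the surviving terms are $\exp(\zsub{k}-\zsub{k}^\prime)$, the prior/proposal ratio $[\pi(\theta)/\pi(\theta^\prime)][q_\Theta(\theta^\prime|\theta)/q_\Theta(\theta|\theta^\prime)]$, and the residual factor $\prod_{j\neq k}\gsub{j}(\zsub{j}^\prime|\theta)/\gsub{j}(\zsub{j}^\prime|\theta^\prime)$ from the unchanged blocks.

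The step I expect to need the most care is matching this residual factor against \eqref{eq: accept prob u's}: on the $\bm u$-side the unchanged blocks contribute $\sum_{j\neq k}[\zsub{j}(\theta,\usub{j}^\prime)-\zsub{j}(\theta^\prime,\usub{j}^\prime)]$ inside the exponent, which is not the same object as $\prod_{j\neq k}\gsub{j}(\zsub{j}^\prime|\theta)/\gsub{j}(\zsub{j}^\prime|\theta^\prime)$ in general. The clean resolution, consistent with the convention used throughout the subsequent analysis (cf.\ Lemma~\ref{lemm: asymptotics}, where both $z$ and $z^\prime$ are formed at a \emph{common} $\theta$), is to carry out the comparison at fixed $\theta$, so that the error densities $\gsub{j}(\cdot|\theta)$ are shared by the current and proposed states and the residual factor equals one; this identifies the $\zvec$-scheme acceptance with \eqref{eq: accept prob u's}. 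Finally, imposing the perfect proposal of Assumption~\ref{ass: perfect proposal}, $q_\Theta(\theta|\theta^\prime)=\pi(\theta)$, forces $[\pi(\theta)/\pi(\theta^\prime)][q_\Theta(\theta^\prime|\theta)/q_\Theta(\theta|\theta^\prime)]=1$, and since only block $k$ moves we have $z-z^\prime=\zsub{k}-\zsub{k}^\prime$; hence the acceptance probability reduces to $\min\{1,\exp(z-z^\prime)\}$, as claimed.
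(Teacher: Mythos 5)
Your proof is correct. Note that the paper itself states Lemma~\ref{lemma: PM on vecz} without an explicit proof, so there is nothing to compare against line by line; your argument is the natural one, and part (a) — substituting $\wh L(\theta,\bm u)=L(\theta)\exp(z(\theta,\bm u))$ and $p_\Theta(\theta)L(\theta)=\ov L\,\pi(\theta)$ into \eqref{eq:acceptance 2} — is exactly the intended routine calculation.

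On part (b) you have in fact been more careful than the paper. The residual factor $\prod_{j\neq k}\gsub{j}(\zsub{j}^\prime|\theta)/\gsub{j}(\zsub{j}^\prime|\theta^\prime)$ you isolate is a genuine discrepancy: in the $\zvec$-scheme the unchanged coordinates keep the same numerical value but change density with $\theta$, whereas in \eqref{eq: accept prob u's} the unchanged blocks contribute $\exp\bigl(\sum_{j\neq k}[\zsub{j}(\theta,\usub{j}^\prime)-\zsub{j}(\theta^\prime,\usub{j}^\prime)]\bigr)$, a different object. Your resolution — reading the equality at a common $\theta$, consistent with how $z$ and $z^\prime$ are defined in Lemma~\ref{lemm: asymptotics} and with the perfect-proposal analysis that is all the paper subsequently uses — is legitimate. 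An alternative, equally consistent with the paper's framework, is to invoke Assumption~\ref{ass: assumption on variances}(ii)--(iii): there $N_{(k)}(\theta)$ is tuned so that $\zsub{j}\sim\N(-\sigma^2/2G,\sigma^2/G)$ for every $\theta$, so $\gsub{j}(\cdot|\theta)$ is free of $\theta$ and your residual factor is identically one without fixing $\theta$. Either way, the final reduction under Assumption~\ref{ass: perfect proposal} — the prior/proposal ratio collapsing to $1$ and $z-z^\prime=\zsub{k}-\zsub{k}^\prime$ — is correct.
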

The following lemma and corollary are needed to prove Lemma~\ref{lemm: asymptotics}. Their proofs are straightforward and omitted.
\begin{lemma} \label{lemma: posterior}
Suppose that Assumption~\ref{ass: assumption on variances} holds. Then,
\begin{enumerate}
\item [(a)]
If the $ \usub{k}$ are independent and generated from $p_{\Usub{k}}(\cdot)$ for $k=1, \dots, G$, then
$z(\theta,\usub{1:G} )\sim \N(-\sigma^2/2,\sigma^2) $ and $z(\theta, \usub{1:k-1},\usub{k+1:G})\sim \N(-((G-1)/2G)\sigma^2 , ((G-1)/G)\sigma^2)$.
\item [(b)]
$\ov \pi (z(\theta,\usub{1:G}) |\theta)) =\N(z;\sigma^2/2,\sigma^2) $ and $\ov \pi ( z(\theta, \usub{1:k-1},\usub{k+1:G})|\theta) = \N(z; ((G-1)/2G)\sigma^2 , ((G-1)/G)\sigma^2)$.
\end{enumerate}
\end{lemma}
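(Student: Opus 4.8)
The plan is to obtain the two prior laws in part (a) from the unbiasedness of the block estimators together with the normality and variance bookkeeping supplied by Assumption~\ref{ass: assumption on variances}, and then to derive the two posterior laws in part (b) by exponential tilting, using Lemma~\ref{lemma: lemma on z's}. The only computational input needed is the Gaussian identity $\exp(z)\N(z;-v/2,v)=\N(z;v/2,v)$, which is a single completion of the square.

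For part (a), I would first record that unbiasedness of each block gives $\E[\exp(\zsub{k}(\theta,\usub{k}))]=\E[\wh L_{(k)}(\theta,\usub{k})/L_{(k)}(\theta)]=1$ for every $k$. Writing $z=z(\theta,\usub{1:G})=\sum_{k=1}^G\zsub{k}$, independence of the $\usub{k}$ gives $\Var(z)=\sum_k\Var(\zsub{k})=\sigma^2$ by Assumption~\ref{ass: assumption on variances}(ii), and, again by independence and unbiasedness, $\E[\exp(z)]=\prod_k\E[\exp(\zsub{k})]=1$. Since $z$ is normal by Assumption~\ref{ass: assumption on variances}(iii), say $z\sim\N(m,\sigma^2)$, the moment generating function identity $\exp(m+\tfrac12\sigma^2)=\E[\exp(z)]=1$ forces $m=-\tfrac12\sigma^2$, so $z\sim\N(-\sigma^2/2,\sigma^2)$. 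Note I work directly with the sums, so individual normality of the $\zsub{k}$ is not required. The identical argument applied to $z(\theta,\usub{1:k-1},\usub{k+1:G})=\sum_{j\neq k}\zsub{j}$, which is normal by Assumption~\ref{ass: assumption on variances}(iii) and has variance $((G-1)/G)\sigma^2$ by independence, forces its mean to be $-((G-1)/2G)\sigma^2$, completing part (a).

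For part (b), the full-sum case is immediate: Lemma~\ref{lemma: lemma on z's}(d) gives $\ov\pi(z|\theta)=\exp(z)g_Z(z|\theta)$, and substituting $g_Z(z|\theta)=\N(z;-\sigma^2/2,\sigma^2)$ from part (a), one completion of the square yields $\exp(z)\N(z;-\sigma^2/2,\sigma^2)=\N(z;\sigma^2/2,\sigma^2)$; the normalising prefactor is $\exp(-\tfrac12\sigma^2+\tfrac12\sigma^2)=1$, so the tilt is automatically a density. For the leave-one-out case I would argue that the posterior marginal of the partial sum is the $\exp(s)$-tilt of its prior marginal. Concretely, by Lemma~\ref{lemma: lemma on z's}(c) the full posterior is $\prod_{j=1}^G\exp(\zsub{j})\gsub{j}(\zsub{j}|\theta)$, and since $\int\exp(\zsub{k})\gsub{k}(\zsub{k}|\theta)\d\zsub{k}=1$ by Lemma~\ref{lemma: lemma on z's}(a), integrating out $\zsub{k}$ leaves the posterior of $(\zsub{j})_{j\neq k}$ exactly equal to $\prod_{j\neq k}\exp(\zsub{j})\gsub{j}(\zsub{j}|\theta)$, i.e. the prior joint density multiplied by $\exp(\sum_{j\neq k}\zsub{j})$. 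Because this tilt factor is constant and equal to $\exp(s)$ on the level set $\{\sum_{j\neq k}\zsub{j}=s\}$, marginalising onto $s=z(\theta,\usub{1:k-1},\usub{k+1:G})$ shows the posterior density of this sum equals $\exp(s)$ times its prior density $\N(s;-((G-1)/2G)\sigma^2,((G-1)/G)\sigma^2)$ from part (a); completing the square once more (again the mean equals $-\tfrac12$ times the variance, so the tilt is already normalised) returns $\N(s;((G-1)/2G)\sigma^2,((G-1)/G)\sigma^2)$.

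I expect the only step requiring care to be this leave-one-out posterior, where I must justify that the posterior law of the partial sum is the $\exp(s)$-tilt of its prior law rather than recomputing it coordinate-by-coordinate. This is the main (minor) obstacle, and it is handled cleanly by the level-set marginalisation argument together with the two facts from Lemma~\ref{lemma: lemma on z's}(a),(c) noted above; everything else reduces to the unbiasedness relation and routine completion of the square.
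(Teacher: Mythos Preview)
Your proof is correct. The paper actually omits the proof of this lemma entirely, stating only that it is ``straightforward and omitted,'' so your write-up is more detailed than the paper's treatment. The approach you take---pinning down the mean from unbiasedness via the log-normal moment identity, and obtaining the posterior laws by the exponential tilt in Lemma~\ref{lemma: lemma on z's}---is exactly the standard route the authors have in mind, and your level-set marginalisation argument for the leave-one-out posterior is a clean way to avoid assuming normality of the individual $\zsub{k}$.
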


\begin{corollary}\label{corrol: corollar on implications}
Suppose that Assumptions~\ref{ass: assumption on variances} and \ref{assum: assumption on additional term} hold.
If $(\usub{1:k-1}^\prime,\usub{k+1:G}^\prime)\sim \ov \pi(\cdot |\theta)$
and $\usub{k}\sim p_{\Usub{k}}(\cdot) $ and they are independent,
then, $\zsub{k}(\theta, \usub{k}) + z(\theta, \usub{1:k-1}^\prime,\usub{k+1:G}^\prime) \sim \N(((G-2)/2G)\sigma^2, \sigma^2)$.
\end{corollary}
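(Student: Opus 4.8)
The plan is to reduce the claim to a two-moment computation. By Assumption~\ref{assum: assumption on additional term}, the sum $\zsub{k}(\theta, \usub{k}) + z(\theta, \usub{1:k-1}^\prime, \usub{k+1:G}^\prime)$ is normally distributed for the given $\theta$, so it suffices to identify its mean and variance and the Gaussian shape then follows automatically; no convolution or separate central-limit argument is needed. Since $\usub{k}$ and $(\usub{1:k-1}^\prime, \usub{k+1:G}^\prime)$ are assumed independent, the mean of the sum is the sum of the two marginal means and the variance of the sum is the sum of the two marginal variances, so the argument splits into determining the marginal law of each summand.

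For the second summand I would simply invoke Lemma~\ref{lemma: posterior}(b): because $(\usub{1:k-1}^\prime, \usub{k+1:G}^\prime) \sim \ov\pi(\cdot|\theta)$, the leave-one-block-out error $z(\theta, \usub{1:k-1}^\prime, \usub{k+1:G}^\prime)$ has the posterior (tilted) law $\N\bigl(\tfrac{G-1}{2G}\sigma^2, \tfrac{G-1}{G}\sigma^2\bigr)$. For the first summand, $\usub{k}$ is drawn from the prior $p_{\Usub{k}}$, so by Assumption~\ref{ass: assumption on variances}(ii) its variance is $\Var(\zsub{k}) = \sigma^2/G$. Its mean follows from unbiasedness, $\E[\exp(\zsub{k})] = 1$ (equivalently Lemma~\ref{lemma: lemma on z's}(a)); most cleanly, writing the prior-law decomposition $z(\theta,\usub{1:G}) = \zsub{k} + z(\theta,\usub{1:k-1},\usub{k+1:G})$ and taking expectations under the prior marginals of Lemma~\ref{lemma: posterior}(a) gives $\E(\zsub{k}) = -\tfrac{1}{2}\sigma^2 + \tfrac{G-1}{2G}\sigma^2 = -\tfrac{1}{2G}\sigma^2$.

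Combining, the mean of the sum is $\tfrac{G-1}{2G}\sigma^2 - \tfrac{1}{2G}\sigma^2 = \tfrac{G-2}{2G}\sigma^2$ and the variance is $\tfrac{G-1}{G}\sigma^2 + \tfrac{1}{G}\sigma^2 = \sigma^2$, which yields the stated $\N\bigl(\tfrac{G-2}{2G}\sigma^2, \sigma^2\bigr)$. There is no serious obstacle here; the one point requiring care is to use the \emph{posterior} marginal for the $G-1$ current blocks while using the \emph{prior} marginal for the freshly proposed block $\usub{k}$, since the two marginals differ precisely by the $\exp(z)$ tilting that flips the sign of the mean, and conflating them would misplace the mean by an $O(1)$ term. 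Normality of the sum is supplied by Assumption~\ref{assum: assumption on additional term}, which is justified in Remark~\ref{remark: remark on assumption additional term} for large $G$.
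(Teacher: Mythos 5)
Your proof is correct: the paper itself omits the proof of this corollary, stating only that it is ``straightforward,'' and your argument (normality of the sum from Assumption~\ref{assum: assumption on additional term}, the posterior marginal of the leave-one-out error from Lemma~\ref{lemma: posterior}(b), the prior mean of $\zsub{k}$ by differencing the two prior laws in Lemma~\ref{lemma: posterior}(a), and additivity of means and, by independence, variances) is exactly the intended computation. Your closing caution about not conflating the prior and posterior (tilted) marginals is well placed, as that sign flip in the mean is the only place the calculation could go wrong.
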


\begin{proof}[Proof of Lemma~\ref{lemm: asymptotics}]
The proof of the lemma
follows directly from Lemma~\ref{lemma: posterior} and Corollary~\ref{corrol: corollar on implications}.
\end{proof}

The next lemma gives the conditional and unconditional acceptance probabilities
of the \MH~scheme for $z$ and $\theta$.

\begin{lemma}\label{lem:acceptance prob}
Suppose Assumptions \ref{ass: assumption on variances} to \ref{ass: perfect proposal} hold and $\rho = 1-1/G$.
\begin{itemize}
\item [(i)]
The acceptance probability of the \MH{} scheme conditional on $z': = z(\theta',\usub{1:G}^\prime)$ is
\begin{align*}\label{eq:accept conditional}
\P({\rm accept}|z',\rho,\s) & =\exp(-x+\tau^2/2)\Phi\left(\frac{x}{\tau}-\tau\right)+\Phi\left(\frac{-x}{\tau}\right)
\end{align*}
with $x:=\bigg (z' + \frac{\sigma^2}{2}\bigg ) (1-\rho) $ and
$\tau:=\s\sqrt{ 1- \rho^2} $.
\item [(ii)] The unconditional acceptance probability of the \MH{} scheme is
\beq\label{eq:accept uncontional}
\P({\rm accept}|\rho,\s) = 2\left(1-\Phi \Big  (\frac{\s\sqrt{ 1 - \rho}}{\sqrt{2}}\Big )\right).
\eeq
\end{itemize}

\begin{proof}
We use the following results to obtain the conditional acceptance probability.
\begin{align}
\int_{-\infty}^A \exp(z) \N(z;a,b^2) dz & = \exp(a+b^2/2)\Phi\bigg ( \frac{A - a-b^2}{b}\bigg)  \label{eq: integ 1}\\
\int_A^\infty \N(z; a, b^2) dz & = \Phi\bigg ( \frac{a-A}{b}\bigg ), \label{eq: integ 2}\
\end{align}
where $\Phi(\cdot)$ denotes the standard normal CDF.
From Lemma~\ref{lemm: asymptotics}, we have that $a(z'):= \E(z|z') = -\sigma^2/2G + \rho z'$ and $\tau^2:= \Var(z|z') = \sigma^2(1-\rho^2)$, so that
the conditional
density of $z$ given $z'$ is $\N(z; a(z'), \tau^2)$. Using
\eqref{eq: integ 1} and \eqref{eq: integ 2}, the conditional probability of
acceptance is
\begin{align*}
\int \min(1, \exp(z-z')) & \N(z; a(z'), \tau^2)\d z  = \int_{-\infty}^ {z'} \exp(z-z')\N(z; a(z'), \tau^2) dz + \int_{z'} ^\infty \N(z; a(z'), \tau^2)\d z\\
& = \exp \bigg ( a(z') - z' + \tau^2/2 \bigg ) \Phi\bigg ( \frac{ z'-a(z') - \tau^2}{\tau} \bigg ) + \Phi\bigg ( \frac{a(z') - z'}{\tau} \bigg )\\
& = \exp \bigg ( -y + \tau^2/2 \bigg ) \Phi\bigg ( \frac{ y - \tau^2}{\tau} \bigg ) + \Phi\bigg ( \frac{-y}{\tau} \bigg ),
\end{align*}
where $y:= z' - a(z') = (1-\rho) (z'+ \sigma^2/2) $.

We now obtain the unconditional acceptance probability. We deduce from Lemma~\ref{lemm: asymptotics} that $z - z^\prime \sim \N ( - \sigma^2(1-\rho); 2\sigma^2(1-\rho) ) $.
The required result is now obtained using the identity $e^v \N(v; -a , 2a)= \N(v; a, 2a) $, with $a = \sigma^2 (1-\rho)$.
\end{proof}
\end{lemma}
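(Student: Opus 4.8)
The plan is to leverage the two structural facts already in hand. Under the perfect-proposal Assumption~\ref{ass: perfect proposal}, part (b) of Lemma~\ref{lemma: PM on vecz} collapses the Metropolis--Hastings acceptance probability to $\min\{1,\exp(z-z')\}$, and Lemma~\ref{lemm: asymptotics} establishes that $(z',z)$ is jointly Gaussian. Consequently the whole computation reduces to evaluating Gaussian expectations of the truncated exponential $\min\{1,\exp(z-z')\}$, first conditionally on $z'$ (part (i)) and then marginally (part (ii)).

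For part (i) I would first read off from Lemma~\ref{lemm: asymptotics} the conditional law $z\mid z'\sim \N(a(z'),\tau^2)$ with $a(z')=-\sigma^2/(2G)+\rho z'$ and $\tau^2=\sigma^2(1-\rho^2)$. Then I would split the integral $\int \min\{1,e^{z-z'}\}\,\N(z;a(z'),\tau^2)\,dz$ at the threshold $z=z'$, where the two branches of the minimum switch: on $(-\infty,z')$ the integrand is $e^{z-z'}\N(z;a(z'),\tau^2)$, and on $(z',\infty)$ it is simply $\N(z;a(z'),\tau^2)$. Each branch is a standard Gaussian integral, handled by the exponential-tilting identity $\int_{-\infty}^A e^z\N(z;a,b^2)\,dz=e^{a+b^2/2}\Phi((A-a-b^2)/b)$ and the normal-tail identity $\int_A^\infty \N(z;a,b^2)\,dz=\Phi((a-A)/b)$, both obtained by completing the square in the exponent. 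Substituting $A=z'$, $a=a(z')$, $b=\tau$ and collecting terms in $y:=z'-a(z')=(1-\rho)(z'+\sigma^2/2)$ — where I would use $1-\rho=1/G$ to absorb the $\sigma^2/(2G)$ into $(1-\rho)\sigma^2/2$ — produces exactly the claimed expression, identifying $x=y$.

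For part (ii), rather than integrating the part (i) formula against the marginal of $z'$, I would work directly with the difference $v:=z-z'$, whose marginal law follows from the bivariate normal of Lemma~\ref{lemm: asymptotics}: its mean is $-\sigma^2(1-\rho)$ and its variance is $\Var(z)+\Var(z')-2\Cov(z,z')=2\sigma^2(1-\rho)$, so $v\sim \N(-a,2a)$ with $a:=\sigma^2(1-\rho)$. Writing $\P(\text{accept})=\int_{-\infty}^0 e^v\N(v;-a,2a)\,dv+\int_0^\infty \N(v;-a,2a)\,dv$, the key identity is the tilt $e^v\N(v;-a,2a)=\N(v;a,2a)$, which converts the first integral into $\Phi(-a/\sqrt{2a})=1-\Phi(\sqrt{a/2})$; the second integral equals the same $1-\Phi(\sqrt{a/2})$ precisely because the pre-tilt mean $-a$ and the tilted mean $+a$ are symmetric about the split point $v=0$. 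Adding the two equal pieces yields $2(1-\Phi(\sqrt{a/2}))$, and $\sqrt{a/2}=\sigma\sqrt{1-\rho}/\sqrt{2}$ gives the stated formula.

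The individual calculations are routine, so the only genuine care needed is bookkeeping: keeping the conditional mean $a(z')$ and variance $\tau^2$ straight, and completing the square consistently in the two tilting identities so the sign conventions match. The one step I would flag as mildly subtle is the collapse in part (ii): it is not obvious a priori that the sub-acceptance and sub-rejection contributions coincide, and the reason they do is exactly the $\pm a$ symmetry about the threshold $v=0$ created by the exponential tilt — which is what delivers the clean factor of $2$.
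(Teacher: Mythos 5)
Your proposal is correct and follows essentially the same route as the paper's proof: part (i) splits the integral at $z=z'$ and applies the same exponential-tilting and normal-tail identities to the conditional law $\N(a(z'),\tau^2)$, and part (ii) works with the difference $v=z-z'\sim\N(-a,2a)$, $a=\sigma^2(1-\rho)$, and the tilt identity $e^v\N(v;-a,2a)=\N(v;a,2a)$. Your added remark explaining why the two halves in part (ii) coincide (the $\pm a$ symmetry about $v=0$) is a nice clarification of a step the paper leaves implicit, but it is not a different method.
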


We use the next lemma  to prove Lemma~\ref{lem: sigma opt}. It is of interest in its own right as it shows that under our assumptions
the inefficiency is independent of the function.
\begin{lemma}\label{lem: IF}
The inefficiency is given by
\begin{align} \label{eq: ineff}
\IF(\s,\rho)= 1 + 2 \E_{z' \sim \pi(z'|\sigma)}\bigg (  \frac{1-k(z'|\sigma, \rho)}{k(z'|\sigma, \rho) } \bigg ),
\end{align}
where $k(z'|\rho, \sigma)=\Pr({\rm accept}|z',\rho,\s)$ is the acceptance probability of the MCMC scheme conditional on the previous iterate $z'$ and is given by
part~(i) of Lemma~\ref{lem:acceptance prob}.

\begin{proof} 
For notational simplicity, we write the proposal density $q(z|z';\rho,\s)$ as $q(z|z')$,
 the acceptance probability $
\min \left \{ 1, \exp(z- z^\prime)\right  \} .
$ as $\alpha(z',z;\rho,\s)$
 as $\alpha(z',z)$ and the  acceptance
 probability $k(z'|\sigma, \rho)$,
 conditional on the previous iterate, as $k(z')$.
Let $\{(\t^{[j]},z^{[j]}),j=1, \dots, M\}$ be iterates, after convergence, of the Markov chain produced by the PM sampling scheme. Then,
the Markov transition distribution from $(\theta',z')$ to $(\theta,z)$ is
\bean
p(\t',z'; \d\t,\d z)&=&\a(z',z)\pi(\t)q(z|z')\d\theta \d z +\left(1-\int\a(z',z^*)\pi(\t^*)q(z^*|z')d\t^*dz^*\right)\delta_{(\t',z')}(\d \t,\d z)\\
&=&\a(z',z)\pi(\t)q(z|z')d\theta d z +\left(1-k(z'|\s,\rho)\right)\delta_{(\t',z')}(d\t,dz),
\eean
 $\delta_{\theta', z'}(\d\theta, \d z)$ is the
probability measure concentrated at $(\theta',z')$.

Consider now the space of functions
\begin{align*}
\mathfrak{F} = \bigg \{  \wt\varphi:\wt\Theta=\Theta\otimes\Bbb{R}\mapsto\Bbb{R}& ,\wt\varphi = \varphi(\theta)\psi(z), \pi(\varphi):=\E_{\theta \sim \pi(\theta)}(\varphi)  = 0,
\pi(\varphi^2):=\E_{\theta \sim \pi(\theta)}(\varphi^2) < \infty , \\
&   \pi(\psi^2):=E_{z \sim \pi(z)} (\psi)^2 < \infty\bigg \}.
  \end{align*}
We define the operator $P: \mathfrak{F}\mapsto \mathfrak{F}$ as
\begin{align*}
(P\wt\varphi)(\t,z)&: =\int\wt\varphi(\t^*,z^*)p(\t,z; \t^*,z^*)d\t^*dz^*\\
&=\pi(\varphi) \int \psi(z) \alpha(z,z^*)q(z^*|z) dz^* + \wt \varphi(\theta)(1-k(z))\\
& = \varphi(\theta)\psi(z)(1-k(z)).
\end{align*}
as $\pi(\varphi)=0$ by assumption. It is straightforward to check that $(P^j\wt\varphi)(\theta,z) = \varphi(\theta)\psi(z)(1-k(z))^j$ and
that $(P\wt \varphi)(\theta^{[j-1]}, z^{[j-1]}) = E(\wt \varphi(\theta^{[j]},z^{[j]})|\theta^{[j-1]},z^{[j-1]}) $. Hence,
$(P^j\varphi)(\theta_{0}, z_{0}) =\wt \varphi(\theta_0,z_0)(1-k(z_0))^j$.

We now consider $\wt \varphi(\theta, z) = \varphi(\theta)\psi(z) $ with $\psi(z) \equiv 1 $ so that $\wt \varphi \in \mathfrak{F}$;
suppose also that $(\theta_0, z_0)\sim \wt \pi_N$.
Define $c_j: = \Cov(\wt \varphi(\theta^{[j]},z^{[j]}), \wt \varphi(\theta^{[0]}, z^{[0]}) ) =
\Cov( \varphi(\theta^{[j]}),  \varphi(\theta^{[0]}) ) $. Then,
\begin{align*}
c_j & =  \E\big (\wt \varphi(\theta^{[0]},z^{[j]})\wt \varphi(\theta^{[0]}, z^{[0]})\big  )\\
& = \E_{(\theta^{[0]}, z^{[0]})\sim  \wt \pi_N} \big  (  \E (\wt \varphi(\theta^{[0]},z^{[j]})  |\theta^{[0]}, z^{[0]}  ) \wt \varphi(\theta^{[0]}, z^{[0]})\big ) \\
& = \E_{(\theta^{[0]}, z^{[0]})\sim  \wt \pi_N} \big ( (1-k(z_0))^j \wt \varphi( \theta^{[0]}, z^{[0]}  )^2\big ) \\
& = \E_{ z^{[0]} \sim  \wt \pi_N(z)} \big ( (1-k(z^{[0]}))^j)\E_{\theta^{[0]} \sim \pi} ( \varphi(\theta^{[0]})^2\big ) \\
\intertext{because $z^{[0]} $ only depends on $\sigma$ by construction}
& = \E_{ z^{[0]}\sim  \wt \pi_N(z)}\big ( (1-k(z^{[0]}))^j\big )c_0.
\end{align*}
The inefficiency $\IF$ is defined as
\begin{align*}
\IF & = (c_0 + 2 \sum_{j=1}^\infty c_j ) /c_0
 = 1 + 2\sum_{j=1}^\infty \E_{z \sim \wt \pi_N(z) } \bigg ( \big ( 1- k(z) \big )^j\bigg )
= 1 + 2\E_{z \sim \wt \pi_N(z) } \bigg ( \frac{1-k(z)}{k(z)} \bigg )
\end{align*}
as required.
\end{proof}
\end{lemma}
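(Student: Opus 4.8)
The plan is to compute the inefficiency directly from the lag autocovariances of the stationary chain $\{\varphi(\theta^{[j]})\}$ and to exploit the especially simple form that the $(\theta,z)$ transition kernel takes under the perfect proposal of Assumption~\ref{ass: perfect proposal}. First I would write $\IF = 1 + 2\sum_{j=1}^\infty c_j/c_0$, where $c_j := \Cov(\varphi(\theta^{[0]}),\varphi(\theta^{[j]}))$ and $c_0 = \Var_\pi(\varphi)$; this is the standard integrated-autocorrelation expression behind the definition \eqref{eq: IF ratio}. Since $q_\Theta(\theta|\theta')=\pi(\theta)$, at each step a fresh $\theta^*\sim\pi$ is proposed independently of the past and accepted with probability $\min\{1,\exp(z^*-z')\}$ by Lemma~\ref{lemma: PM on vecz}(b); on rejection $(\theta,z)$ is retained. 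The Markov transition from $(\theta',z')$ to $(\theta,z)$ therefore splits into an ``accept'' part that redraws $\theta$ from $\pi$ and a ``stay'' part carrying mass $1-k(z')$ at $(\theta',z')$.

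The key step is an operator computation on product functions. I would work on the class $\mathfrak{F}$ of functions $\wt\varphi(\theta,z)=\varphi(\theta)\psi(z)$ with $\pi(\varphi)=0$, and consider $P\wt\varphi$, the one-step conditional expectation. The ``accept'' contribution integrates $\varphi(\theta^*)$ against $\pi$ and hence vanishes because $\pi(\varphi)=0$; only the ``stay'' contribution survives, giving $(P\wt\varphi)(\theta,z)=\varphi(\theta)\psi(z)(1-k(z))$. Iterating yields $(P^j\wt\varphi)(\theta,z)=\varphi(\theta)\psi(z)(1-k(z))^j$. It is precisely here that the perfect-proposal assumption does the work: it makes the off-diagonal part of $P$ collapse to zero on centred functions, so that $P$ acts diagonally with ``multiplier'' $1-k(z)$.

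Taking $\psi\equiv 1$ and using the tower property together with stationarity, I would then obtain $c_j=\E_{\ov\pi}\big(\varphi(\theta^{[0]})^2(1-k(z^{[0]}))^j\big)$. Invoking the asymptotic independence of $z$ and $\theta$ under the stationary target (Assumption~\ref{ass: assumption on variances} makes the law of $z$ free of $\theta$; cf.\ Lemma~\ref{lemma:uncorrelated in n}), this factorises as $c_j=\Var_\pi(\varphi)\,\E_{z\sim\pi(z|\sigma)}\big((1-k(z))^j\big)=c_0\,\E_{z}\big((1-k(z))^j\big)$. Dividing by $c_0$ removes all dependence on $\varphi$, which simultaneously shows that $\IF$ is function-free, and summing the geometric series $\sum_{j\ge1}(1-k(z))^j=(1-k(z))/k(z)$ inside the expectation delivers \eqref{eq: ineff}.

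I expect the main obstacle to lie in the two places where centring and independence are used, rather than in any heavy calculation. Showing cleanly that the ``accept'' branch of $P$ annihilates centred functions relies on the perfect proposal, and factorising $c_j$ relies on the idealized independence of $z$ and $\theta$ under $\ov\pi$; both must be stated carefully, since they are exactly the assumptions that render the result exact rather than approximate. A minor secondary point is justifying the interchange of the infinite sum and the expectation, which follows from $0\le 1-k(z)<1$ with $k(z)>0$ almost surely, so the geometric series converges pointwise to the integrable bound $(1-k(z))/k(z)$ and monotone (or dominated) convergence applies.
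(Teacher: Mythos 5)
Your proposal is correct and follows essentially the same route as the paper's proof: the same accept/stay decomposition of the transition kernel under the perfect proposal, the same operator computation $(P^j\wt\varphi)(\theta,z)=\varphi(\theta)\psi(z)(1-k(z))^j$ on centred product functions, the same factorisation of $c_j$ using the fact that the stationary law of $z$ does not depend on $\theta$, and the same geometric-series summation. Your added remarks on justifying the sum--expectation interchange and on where the perfect-proposal and independence assumptions enter are sensible refinements of the same argument, not a different one.
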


\begin{proof}[Proof of Lemma \ref{lem: sigma opt}]
From Lemma~\ref{lemm: asymptotics}, $\ov\pi(z'|\s)=\N(z';\s^2/2,\s^2)$.
Let $\omega: = [(1-\rho) (z'+\sigma^2/2) - \tau^2]/\tau$ with $\tau=\s\sqrt{1-\rho^2}$.
Then,
\begin{align*}
\omega & \sim \N\bigg ( -\frac{\rho \tau}{1+\rho} , \frac{1-\rho}{1+\rho} \bigg),
\end{align*}
and we note that the variance of $\omega$ just depends on $\rho$. For $\rho$ close to 1,
the variance of $\omega$ is approximately $1/(2G)$,  which is very small. Thus, $\omega$ will be concentrated close to its
mean $\omega^*: = -\rho \tau/(1+\rho)$.
Define
$p^*(\omega|\tau): = 1- k(z'|\rho, \sigma) = \Phi(\omega + \tau) + \exp(-\omega \tau - \tau^2/2) \Phi(\omega)$.
Then,
\begin{align*}
\IF(\sigma, \rho) & = \int \frac{1+p^*(\omega|\tau) } {1 - p^*(\omega|\tau)   } \N\bigg(\omega; -\frac{\rho \tau}{1+\rho} , \frac{1-\rho}{1+\rho}\bigg) d\omega.
\end{align*}
It is convenient to write $\IF(\sigma, \rho)$ as  $\IF(\tau|\rho)$, which we will optimize the computing time over $\tau$ keeping $\rho$ fixed.
Let,
\begin{align*}
f(\omega; \tau) & := \frac{1 + p^*(\omega|\tau) } {1 - p^*(\omega|\tau)}.
\end{align*}
Using the 4th order Taylor series expansion of $f(w;\tau)$ at
$\omega = \omega^*$, the inefficiency factor can be approximated by
\beqn
\IF_{\rm approx} (\tau|\rho) = f(\omega^*|\tau)+\frac12\frac{1-\rho}{1+\rho}f^{(2)}(\omega^*|\tau)+\frac18\left(\frac{1-\rho}{1+\rho}\right)^2f^{(4)}(\omega^*|\tau),
\eeqn
which is considered as a function of $\tau$ with $\rho$ fixed. This approximation is very accurate because, as noted,  the variance of $\omega$ is very small for large $G$.
So the computing time $\CT(\sigma,\rho)={\IF(\s,\rho)}/{\s^{1/\varpi}}$
is approximated by
\begin{align*}
\CT_{\rm approx} (\tau|\rho) & =(1-\rho^2)^{\frac{1}{2\varpi}}\frac{\IF_{\rm approx} (\tau|\rho)}{\tau^{1/\varpi}} \propto \frac{\IF_{\rm approx} (\tau|\rho)}{\tau^{1/\varpi}}
\end{align*}
Minimizing this term over $\tau$, for $\rho $ close to 1,
 we find that $\CT_{\rm approx} (\tau|\rho)$ is minimized at $\tau\approx 2.16$ for $\varpi=1/2$,
 and at $\tau\approx 0.82$ for $\varpi\approx3/2$.
 So the optimal $\s_\opt\approx 2.16/\sqrt{1-\rho^2}$ for $\varpi=1/2$ and $\s_\opt\approx 0.82/\sqrt{1-\rho^2}$ for $\varpi=3/2-\epsilon$ with any arbitrarily small $\epsilon$.

 For $\s_\opt\approx 2.16/\sqrt{1-\rho^2}$, the unconditional acceptance rate \eqref{eq:accept uncontional} is
\bean
P(\text{accept}|\rho,\s_\opt) &=& 2\left(1-\Phi \Big  (\frac{\s_\opt\sqrt{ 1 - \rho}}{\sqrt{2}}\Big )\right)\\
&=&2\left(1-\Phi \Big  (\frac{\s_\opt\sqrt{ 1 - \rho^2}}{\sqrt{2(1 + \rho)}}\Big )\right)\\
&\approx&2\left(1-\Phi \Big  (\frac{2.16}{{2}}\Big )\right)\approx 0.28.
\eean
Similarly, for $\s_\opt\approx 0.82/\sqrt{1-\rho^2}$, this probability is approximately 0.68.
\end{proof}

\section{Some large-sample properties of block PM for panel-data} \label{sec: large sample}
This section derives some properties of the block PM for large $T$ for the panel-data models discussed in Sections~\ref{SS: analysis of block}
and \ref{sub:PanelData} and shows that: (a)~the total number of samples required per MCMC iteration
is $O(T^{3/2})$ if MC is used;
and $O(T^{7/6})$  if RQMC is used, whereas the independent PM requires $O(T^2)$ samples;
 and (b)~we show that when $T$ is large the posterior correlation between $\theta$ and $z$ is weak. Since $\pi(\theta, z)$ is asymptotically (in $T$) multivariate  normal, this means that $\theta$ and $z$ are close to independent when $T$ is large, suggesting that moving $\bm u$ slowly, and hence moving $z$ slowly for a given $\theta$, does not greatly affect the mixing of the $\theta $ iterates.

Consider the panel-data model, with the panels in the $k$th  block
denoted by ${\G}_k$,  and suppose that we use the same $N_i=N_{k}$ samples for all panels $i \in {\G}_k$.
Let $L_i(\theta)=p(y_i| \theta)$ be the likelihood of the $i$th panel, and let $\wh L_{i}(\theta, \bm u_i)$ be the unbiased estimate of $L_i(\theta)$. We assume that
\begin{assumption} \label{ass: panel data assump}
For each $i \in {\G}_k$ and parameter value $\theta$, there exists an $A_i(\theta)^2$ such that as $ N_k \rightarrow \infty$,
\begin{align}\label{eq: clt panel}
N_{k}^{\varpi} \bigg ( \wh L_{i}(\theta, u_i) - L_i(\theta) \bigg )  & \stackrel{d}{\Rightarrow} \N(0, A_i(\theta)^2),
\end{align}
for some $\varpi>0$.
\end{assumption}
The central limit theorem \eqref{eq: clt panel} holds for most importance sampling estimates of the likelihood,
where $\varpi=1/2$ if MC is used and $\varpi=3/2-\eps$,  with an arbitrarily small $\eps>0$, if RQMC is used \citep[see, e.g.][]{Loh2003,Owen:1997}.

We now present a  result that supports the claim that for large $T$,
moving $\bm u$ slowly, and hence moving $z$ slowly given $\theta$, does not have an undesirable effect on the mixing of the $\theta$ iterates.
Its proof is in Appendix~\ref{Proofs}.
\begin{lemma}[Posterior orthogonality of $\t$ and $z$]\label{lemma:uncorrelated in n}
Suppose that the same number $N_T=O(T^{1/(4\varpi)})$ of samples is used for each panel and that
\begin{itemize}
\item[(i)] $\ov\pi(z|\t) = \N(z;-\zeta_T(\t)/2,\zeta_T(\t))$ with $\zeta_T(\t):=(T/N_T^{2\varpi})\eta_T(\t)$, $\eta_T(\t):=\frac{1}{T}\sum_{i=1}^T\gamma_i^2(\t)$.
\item[(ii)] $\E_\pi(\eta_T^2)<\infty$ and $\Var_\pi(\eta_T)=O(1/T)$.
\item[(iii)] $h(\t)$ is a function of $\t\in\Theta$ such that $\E_\pi(h^2)<\infty$, $\Var_\pi(h)=O(1/T)$ and $\Cov_{\pi}(h,\eta_T)=O(1/T)$.
\end{itemize}
Then, the posterior correlation of $h(\theta)$ and the log-likelihood estimation error $z$
is approximately zero for large $T$, i.e., $\Corr_{\ov\pi}(h,z)\to0$, as $T\to\infty$.
\end{lemma}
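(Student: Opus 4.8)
The plan is to compute the posterior covariance $\Cov_{\ov\pi}(h,z)$ and the two posterior variances that enter $\Corr_{\ov\pi}(h,z)$, track their orders in $T$, and show the correlation is $O(T^{-1/4})$. Throughout I write $\kappa_T := T/N_T^{2\varpi}$; since $N_T = O(T^{1/(4\varpi)})$ we have $N_T^{2\varpi}=O(T^{1/2})$ and hence $\kappa_T = O(T^{1/2})$. Condition~(i) then reads $\E_{\ov\pi}(z|\theta) = -\tfrac12\zeta_T(\theta)$ and $\Var_{\ov\pi}(z|\theta)=\zeta_T(\theta)$ with $\zeta_T(\theta)=\kappa_T\,\eta_T(\theta)$. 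I will also use $\E_\pi(\eta_T)=O(1)$, which follows from $\E_\pi(\eta_T^2)<\infty$ in~(ii) by the Cauchy--Schwarz inequality.

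First I would handle the numerator. Conditioning on $\theta$ and using the law of total covariance, together with the fact that $h$ depends on $\theta$ alone (so that $\E_{\ov\pi}(h|\theta)=h(\theta)$ and the within-$\theta$ covariance of $h$ and $z$ vanishes), gives
\begin{align*}
\Cov_{\ov\pi}(h,z) = \Cov_\pi\big(h(\theta),\E_{\ov\pi}(z|\theta)\big) = -\tfrac12\kappa_T\,\Cov_\pi(h,\eta_T).
\end{align*}
By condition~(iii), $\Cov_\pi(h,\eta_T)=O(1/T)$, so $\Cov_{\ov\pi}(h,z)=O(\kappa_T/T)=O(T^{-1/2})$.

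Next I would handle the two variances. Since the $\theta$-marginal of $\ov\pi$ is $\pi$, we have $\Var_{\ov\pi}(h)=\Var_\pi(h)=O(1/T)$ by~(iii). For $z$, the law of total variance gives
\begin{align*}
\Var_{\ov\pi}(z) = \E_\pi\big(\Var_{\ov\pi}(z|\theta)\big) + \Var_\pi\big(\E_{\ov\pi}(z|\theta)\big) = \kappa_T\,\E_\pi(\eta_T) + \tfrac14\kappa_T^2\,\Var_\pi(\eta_T).
\end{align*}
The first term is $O(T^{1/2})$ using $\E_\pi(\eta_T)=O(1)$, while the second is $\kappa_T^2\,O(1/T)=O(1)$ by~(ii); hence $\Var_{\ov\pi}(z)=O(T^{1/2})$, dominated by the mean of the conditional variance. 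Combining,
\begin{align*}
\Corr_{\ov\pi}(h,z) = \frac{\Cov_{\ov\pi}(h,z)}{\sqrt{\Var_{\ov\pi}(h)\,\Var_{\ov\pi}(z)}} = \frac{O(T^{-1/2})}{\sqrt{O(T^{-1})\,O(T^{1/2})}} = O(T^{-1/4}) \longrightarrow 0,
\end{align*}
which is the claim.

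The order bookkeeping is routine; the one step that requires care is the numerator. The point is that although $\Var_{\ov\pi}(z)$ is large (growing like $T^{1/2}$), the covariance between $h$ and $z$ is channelled entirely through the conditional mean $-\tfrac12\zeta_T(\theta)$, and the assumed smallness $\Cov_\pi(h,\eta_T)=O(1/T)$ forces even the $\kappa_T$-amplified covariance to vanish faster than the geometric mean of the two variances. The main obstacle is therefore to confirm that conditions~(ii)--(iii) are exactly what is needed to render the three orders $O(T^{-1/2})$, $O(T^{-1})$ and $O(T^{1/2})$ compatible with a vanishing correlation; once the conditional moments from~(i) are plugged in, the result is immediate from the tower property.
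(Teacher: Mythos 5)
Your proof is correct and follows essentially the same route as the paper's: condition on $\theta$ to reduce $\Cov_{\ov\pi}(h,z)$ to $\pm\tfrac12\kappa_T\Cov_\pi(h,\eta_T)$, use the total-variance decomposition $\Var_{\ov\pi}(z)=\kappa_T\E_\pi(\eta_T)+\tfrac14\kappa_T^2\Var_\pi(\eta_T)$, note $\Var_{\ov\pi}(h)=\Var_\pi(h)$, and combine the orders to get $\Corr_{\ov\pi}(h,z)=O(N_T^{-\varpi})=O(T^{-1/4})\to0$. The only cosmetic difference is that you invoke the laws of total covariance/variance explicitly where the paper manipulates the expectations directly, and your sign on $\E_{\ov\pi}(z|\theta)$ follows condition (i) literally while the paper's display uses the opposite sign; neither affects the order bookkeeping.
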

\noindent Assumption (i) in Lemma~\ref{lemma:uncorrelated in n} is justified by Lemma  \ref{lemm: clt log pi}, (ii)-(iii) are justified by the Bernstein von-Mises theorem~\citep[see][Section 10.2]{vandervaart1998}.

\begin{proof}[Proof of Lemma \ref{lemma:uncorrelated in n}]
Let $\mu_h:=\E_\pi(h)$ and $\mu_\eta:=\E_\pi(\eta_T)$. Then,
\begin{align*}
\Cov_{\ov\pi}(h,z)&=\E_{\ov\pi}[h(\theta) z]-\E_{\pi}[h(\theta)]\E_{\ov\pi}[z]\\
&=\frac{T}{2N_T^{2\varpi}}\left(\E_\pi[\eta_T(\t)h(\t)]-\E_\pi[\eta_T(\t)]\E_\pi[h(\t)]\right)\\
&=\frac{T}{2N_T^{2\varpi}}\Cov_\pi(h,\eta_T)\\
&=O\left(\frac{1}{N_T^{2\varpi}}\right),
\end{align*}
and,
\begin{align*}
\Var_{\ov\pi}(z)&=\frac14\Var_\pi(\zeta_T(\t))+\E_\pi (\zeta_T(\t))\\
&=\frac{T^2}{4N_T^{4\varpi}}\Var_\pi(\eta_T(\t))+\frac{T}{N_T^{2\varpi}}\mu_\eta\\
&=\frac{T}{N_T^{2\varpi}}\left(\mu_\eta+O\left(\frac{1}{N_T^{2\varpi}}\right)\right).
\end{align*}
\[\Var_{\ov\pi}(h)=\Var_{\pi}(h)=O(1/T).\]
Hence,
\begin{align*}
\Corr_{\ov\pi}(h,z)&=\frac{\Cov_{\ov\pi}(h,z)}{\sqrt{\Var_{\ov\pi}(h)\Var_{\ov\pi}(z)}}=O\left(\frac{1}{N_T^{\varpi}}\right)\to0
\end{align*}
as ${T\to\infty}$.
\end{proof}

\section{Derivation of the expression \eqref{eq: CT} for Computing Time}\label{app: CT derivation}
The average number of samples required in each MCMC iteration to give the same accuracy in terms of variance as $M$ iid iterates $\theta_1, \dots, \theta_M$
from $\pi(\theta)$ is proportional to
\begin{align*}
\frac1M \sum_{i=1}^M \sum_{k=1}^G N_k(\theta_i) \IF(\sigma, \rho) & = \frac1M \sum_{i=1}^M \sum_{k=1}^G \frac{G^{\frac{1}{2\varpi}} \gamma_{(k)}^{{1}/{\varpi}}(\theta_i) }{\sigma^{1/\varpi}}\IF(\sigma,\rho)
 \rightarrow\bigg ( G^{\frac{1}{2\varpi}} \sum_{k=1}^G \ov{\gamma^{1/\varpi}_{(k)}}  \bigg ) \frac{\IF(\sigma,\rho) }{ \sigma^{1/\varpi}}
\end{align*}
as $ M \rightarrow \infty$, where $\ov{\gamma^{1/\varpi}_{(k)}} = \E_{\theta \sim \pi} (\gamma_{(k)}^{1/\varpi}(\theta)) $.
The terms in the brackets are independent of $\s^2$, which means that  the computing time is
proportional to $\CT = \frac{\IF(\sigma,\rho) }{ \sigma^{1/\varpi}}$.

\section{An illustrative toy example} \label{S: A toy example}
This section uses a toy example to illustrate the ideas and results in Section~\ref{SS: analysis of block}.
Suppose that we wish to sample from $\ov\pi(\theta,z)=\pi(\theta)e^zg(z|\sigma)$
in which $\t$ is the parameter of interest, with $\pi(\t)=\N(\t;0,1)$ and $g_Z(z|\s)=\N(z;-\s^2/2,\s^2)$.
Suppose further that $z$ is divided into $G$ blocks so that $z=\sum_{k=1}^G z_{(k)}$ with $z_{(k)}\stackrel{iid}{\sim}\N(-\s_G^2/2,\s_G^2)$, $\s_G^2=\s^2/G$
and $G=100$.

We use the independent PM and the block PM
to sample from $\ov\pi(\theta,z)$ with $\sigma_G^2=2.34$, i.e. $\sigma^2=234$.
Suppose that $(\t',z')$ is the current state.
The proposal $(\t,z)$ in the independent PM is generated by $\theta \sim\pi(\theta)$ and $z\sim g(z|\sigma)$.
The proposal $(\t,z)$ in the block PM scheme is generated as follows.
Let $z'=\sum_{k=1}^Gz_{(k)}'$ be the current $z$-state
and let $k$ be an index uniformly generated from the set $\{1,...,G\}$.
Sample $\zsub{k}\sim \N(-\s^2_G/2,\s^2_G)$ and let $z=\sum_{j\not=k}z_{(k)}'+z_{(k)}$ be the proposal.
Both schemes accept $(\theta,z)$ with probability $\min(1,e^{z-z'})$.

\begin{figure*}[h]
\centerline{\includegraphics[width=.8\textwidth,height=.4\textwidth]{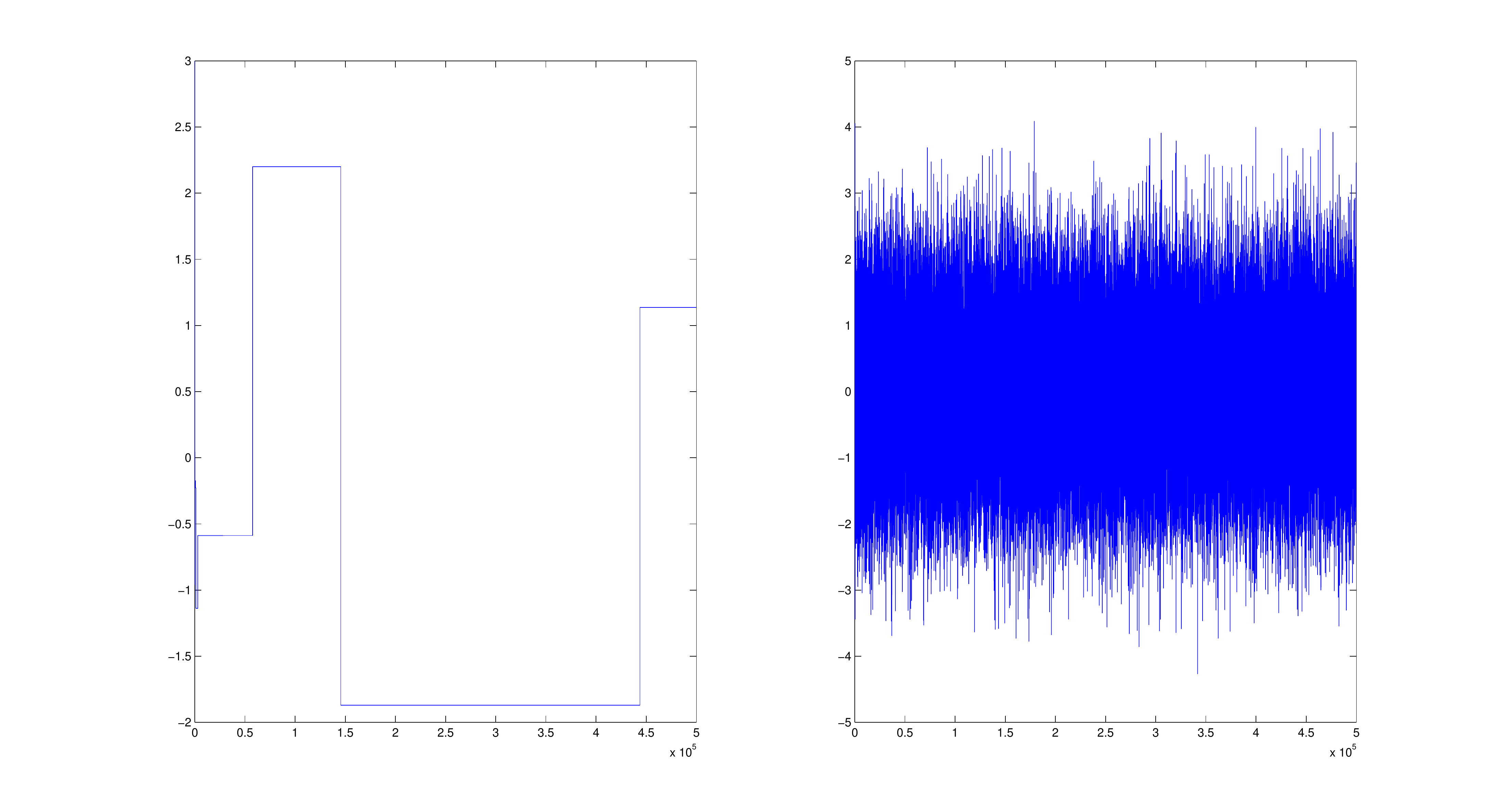}}
\caption{\label{toy example}
Toy example: The iterates of $\t$ generated by the independent PM scheme (left) and the block PM scheme (right). Both chains are initialised at 3 and run for 500,000 iterations.}
\end{figure*}
Figure \ref{toy example} plots the $\t$-samples generated by the independent PM scheme  and by the block PM scheme.
As expected, the independent PM chain is sticky because of the big variance  $\sigma^2=234$ of $z$.

We now study the effect of $\sigma^2$ on the acceptance rate and computing time $\CT(\s)$ of the block sampler.
Figure \ref{toy example 2} shows $\CT(\s)$ and the acceptance rates for various values of $\s^2$. The figure shows that
$\CT(\s)$ has a minimum value of 0.0263 at $\sigma^2=234$, where the acceptance rate is 0.279, which agrees with the theory.
\cite{Pitt:2012} show that the optimal value of $\s$ for the independent PM is around 1.
We also run this optimal independent PM scheme and obtain a value of the computing time $\CT(\s=1)=5.32$.
Hence, the optimal block PM is $5.32/0.0263 \approx 202$ times more efficient than the optimal independent PM.

\begin{figure*}[h]
\centerline{\includegraphics[width=.8\textwidth,height=.4\textwidth]{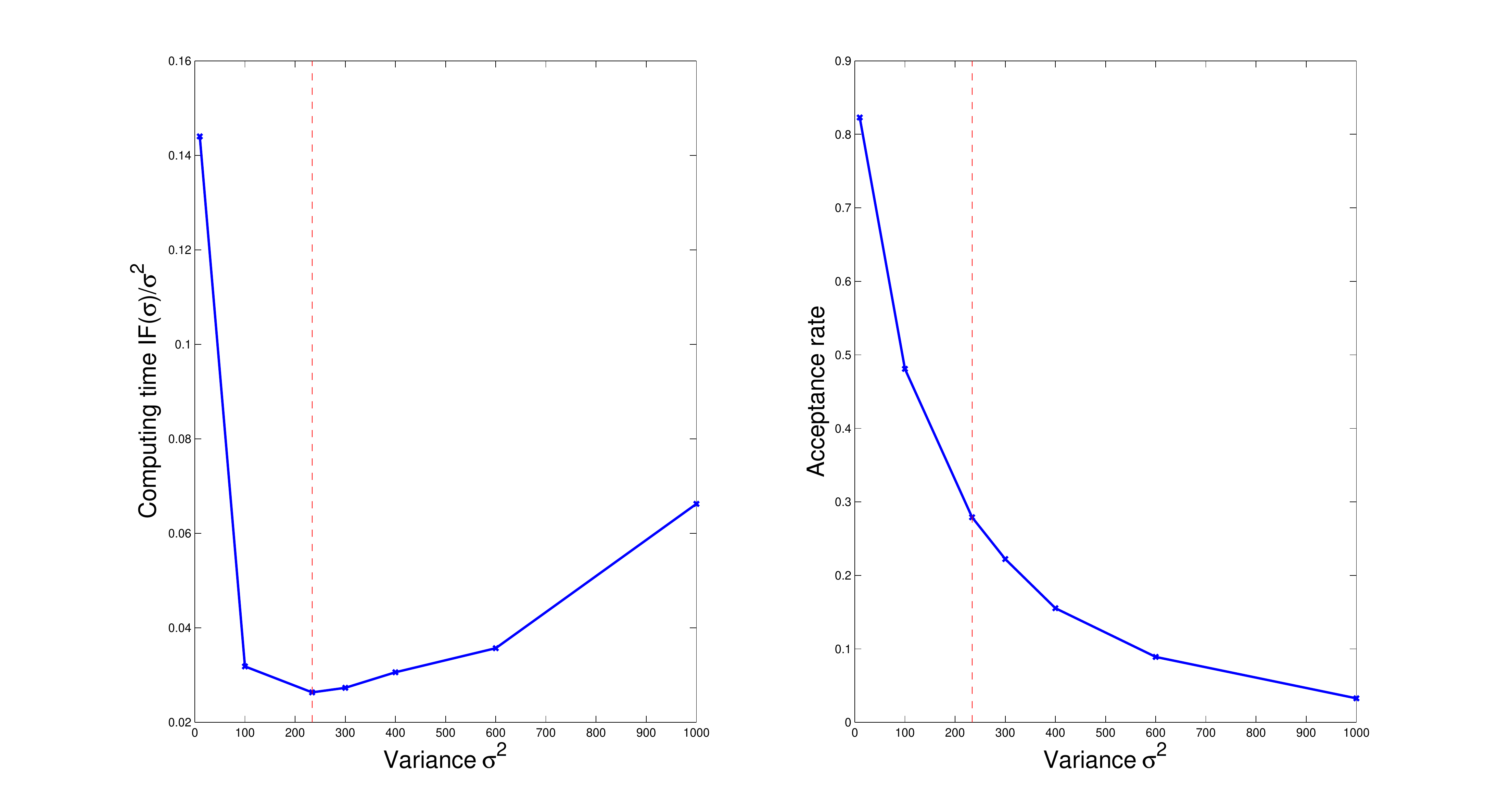}}
\caption{\label{toy example 2}
Toy example: The left panel shows the computing time $\CT(\s)$
and the right panel shows the acceptance rate v.s. the variance $\s^2$.
The dashed lines indicate the values w.r.t. the optimal variance $\s_{\opt}^2=234$.
}
\end{figure*}

\section{Further Applications} \label{S: further applications}
\subsection{ABC example}\label{sec:ABC example}
$\alpha$-stable distributions \citep{Nolan:2007} are heavy-tailed distributions
used in many statistical applications. The main difficulty when working with $\alpha$-stable distributions is that they do not have closed form densities,
which makes it difficult to do inference.
However, one can use ABC to carry out Bayesian inference \citep{Tavare:1997,Peters:2012},
because it is easy to sample from an $\alpha$-stable distribution.
Given the observed data $y$, ABC approximates the likelihood by its likelihood-free version
\beq\label{eq:likelihood free}
L_{\text{LF},\eps}(\t)=\int K_\epsilon(S(y'),S(y))p(y'|\t)\d y',
\eeq
where $K_\epsilon(\cdot,\cdot)$ is a kernel with the bandwidth $\epsilon$
and $S(\cdot)$ is a vector of summary statistics.
Inference is then based on the approximate posterior $p_\text{ABC}(\t|y)\propto p_\T(\t)L_{\text{LF},\eps}(\t)$,
where  $L_{\text{LF},\eps}(\theta)$ is unbiasedly estimated by  $\sum_{i=1}^M K_\epsilon(S(y^{[i]}),S(y))$,
 with $y^{[i]}\stackrel{iid}{\sim}p(\cdot|\theta)$.
Although the likelihood cannot be factorised as in \eqref{eq:panel data llh},
our example illustrates that the block PM scheme still applies.

We use the example in \cite{Peters:2012} and generate a data set $y=\{y_1,...,y_n\}$ with $n=200$ observations
from a univariate $\alpha$-stable distribution with parameters $\alpha=1.7$, $\beta=0.9$,
$\gamma=10$ and $\delta=10$.
The characteristic function $\phi_X(t)$ of a random variable $X$ following an $\alpha$-stable distribution with parameters $\alpha, \beta, \gamma$ and $ \delta$ is
\beq\label{eq:alphaStable}
\phi_X(t) = \begin{cases}
\exp \big(i\delta t- \gamma^\alpha |t|^\alpha \big[ 1 + i\beta \tan \frac{\pi\alpha}{2} \sgn(t)(|\gamma t|^{1-\alpha}-1)\big]\big) & \text{if } \alpha \neq 1 \\
\exp \big(i\delta t- \gamma |t| \big[ 1 + i\beta \frac{2}{\pi} \sgn(t)(\log(\gamma |t|) \big]\big) & \text{if } \alpha = 1. \\
\end{cases}
\eeq

We use the same summary statistics $S(y')=(\wh v_\alpha(y'),\wh v_\beta(y'),\wh v_\gamma(y'),\wh v_\delta(y'))$
 of a pseudo-dataset $y'=\{y_1',...,y_n'\}$
 as in
\cite{Peters:2012} and refer the reader to that paper for details.
We estimate $L_{\text{LF},\eps}(\theta)$ in \eqref{eq:likelihood free}
by $\wh L_{\text{LF},\eps}(\t)=K_\epsilon(S(y'),S(y))$, with $K_\epsilon$ the Gaussian kernel with covariance matrix $\epsilon I_4$,
using  only one pseudo-dataset ($M=1$) as  \cite{Bornn2016} show that $M=1$  is optimal.

Both the independent PM and the block PM were run for 50,000 iterations with the first 10,000 discarded as burn-in.
The block PM scheme was carried out as follows.
Given a vector of parameters $\theta$, write the pseudo-data point $y_i'$ as $f(\theta,u_i)$, with $u_i$ the set of MC random numbers used to generate $y_i'$.
We divide the set $u=\{u_i,i=1,...,200\}$ into $G=100$ blocks
with the $k$th block $u_{(k)}$ consisting of $u_{2k-1}$ and $u_{2k}$, $k=1,...,G$.

Table \ref{tab:ABC} summarises the performance measures for different values of $\epsilon$,
averaged over 10 runs.
In the table, the mean squared error (MSE) is the $l_2$-norm of the difference between
the estimated posterior mean and the true parameters.
See \cite{Pasarica:2003} for a definition of average squared jumping distance (ASD) as a performance measure
in MCMC. It is understood that the bigger the ASD the better.
The results show that the block PM performs better than the independent PM in this example.

\begin{table}[h]
\centering
\vskip2mm
{\small
\begin{tabular}{cccccc}
\hline 
$\epsilon$	&Methods	&Acc. rate 	&MSE	&IACT ratio	&ASD\\
\hline
10	&IPM		&0.31		&1.41	&2.07	&3.14\\
	&BPM		&0.37	     &1.29	&1 	&3.43\\
\hline
2 	&IPM 		&0.20		&1.14   &1.54	&0.70 \\
	&BPM		&0.30		&1.17	&1	&0.96\\
\hline
1 	&IPM 		&0.10		&0.96	&1.75	&0.18      \\
	&BPM		&0.21		&0.95	&1	&0.32\\
\hline
\end{tabular}
}
\caption{ABC example}\label{tab:ABC}
\end{table}
\subsection{State space example}\label{sec:time series example}

We consider a time series $\{y_t, t=1,...,T\}$ generated from the non-Gaussian state space model
\begin{eqnarray}\label{eq:poisson}
y_{t}|x_{t}&\sim&\text{Poisson}(\l_t),\;\;\l_t=e^{\beta+x_t},\\
x_{t+1}&=&\phi x_{t}+\eta_{t},\;\; \eta_{t}\sim N(0,\sigma^2),\;\; x_{1}\sim N(0,\sigma^2/(1-\phi^2)), \notag
\end{eqnarray}
with model parameters $\theta=(\beta,\phi,\sigma^2)$.
We generate the data using the parameter values $\beta=1$, $\phi=0.5$ and $\sigma^2=2(1-\phi^2)$, with $T = 1000$.

Following \cite{Shephard:1997} and \cite{Durbin:1997} we write the likelihood $(\theta)$ as
\begin{align*}
L(\theta) & = \int p(x_1|\theta)p(y_1|x_1, \theta) \prod_{t=2}^T p(x_t|x_{t-1}, \theta) p(y_t|x_t, \theta) \prod_{t=1}^T \d x_t
\end{align*}
We employ the high-dimensional importance sampling method of \cite{Shephard:1997} and \cite{Durbin:1997}
to obtain an unbiased likelihood estimator $\wh L(\t,u)$.
The simulation smoothing step requires $2T$ independent univariate normal variates to generate each sample path of the states,
so the set of random variates $\bm u$ needed is a matrix of size $N\times (2T)$, with $N$ the number of samples.
We divide $\bm u$ into $G=100$ blocks, where $\usub{1}$ consists of the first $2T/G$ columns of $\bm u$,
$\usub{2}$ consists of the next $2T/G$ columns of $\bm u$, etc.

We use the static strategy in this example, i.e. the number of sample paths $N$ is fixed.
Let $\bar\theta$ be some central value of $\theta$, e.g. the MLE estimate using the simulated maximum likelihood method \citep{Gourieroux:1995}.
For simplicity, we set $\bar\theta$ to the true value.
For the independent PM, we chose the value of $N$ so that $\Var(\wh L(\bar\theta,\bm u))\approx 1$,
where the variance $\Var(\wh L(\bar\theta,\bm u))$ is estimated by replication.
For the two block PM schemes,
one using MC and the using RQMC, we select $N$ such that $\Var(\wh L(\bar\theta,\bm u))\approx 2.16^2/(1-\rho^2)$
and $\approx 0.82^2/(1-\rho^2)$ respectively,
with the correlation $\rho$ estimated as follows.
Let $z=\log\;\wh L(\bar\t, \bm u)$ and $z'=\log\;\wh L(\bar\t, \bm u')$
with $\bm u'$ obtained from $\bm u$ by generating a new set for a randomly-selected block $\usub{(k)}$,
with the other blocks kept fixed. We generate $J=1000$ realisations $(z_{j},z_{j}')_{j=1}^J$ of $(z,z')$,
where a large value $N_0$ of $N$ is used,
and estimate $\rho$ by the sample correlation $\wh\rho$.
For the correlated PM of \cite{deligiannidis:doucet:pitt:2015}, we set the correlation $\varrho=0.99$ and use the same $N$ as in the block PM-MC.
Each MCMC scheme was run for 25,000 iterations including a burn-in of 5000 iterations.

\begin{table}[h]
\centering
\vskip2mm
{\small
\begin{tabular}{cccccc}
\hline 
Methods							&$N$		&Acceptance rate 	&IACT  ratio	&CPU ratio	&TNV ratio\\
\hline
IPM-MC 					&3500		&-				&-				&-				&-\\
CPM ($\varrho=.99$)	&56		&0.18			&1.613		&1.336		&2.155\\
BPM-MC		&56		&0.23			&1.154		&1.257		&1.451\\
BPM-RQMC			&16		&0.23			&1				&1				&1\\
\hline
\end{tabular}
}
\caption{State space example: performance measure ratios with the BPM-RQMC as the baseline} \label{tab:ss example}
\end{table}

Table \ref{tab:ss example} summarises the results.
We did not run the IPM-MC as it requires $N=3500$ samples,  which makes it too computationally demanding.
The block PM using RQMC performs the best.
Although both the correlated PM and block PM-MC use the same number of samples $N$,
the second requires less CPU time because it generates only one block of $\bm u$ in each iteration.

\end{document}